\newtheorem{theorem}{Theorem}
\newtheorem{lemma}{Lemma}
\newtheorem{definition}{Definition}
\newtheorem{example}{Example}
\begin{document}

\title{A Covert Queueing Channel in Round Robin Schedulers}

\author{AmirEmad~Ghassami,~\IEEEmembership{Student~Member,~IEEE,}
        Ali~Yekkehkhany,        and~Negar~Kiyavash,~\IEEEmembership{Senior~Member,~IEEE}
\thanks{The authors are with the Department of Electrical and Computer Engineering, and Coordinated Science Laboratory, University of Illinois at Urbana-Champaign, Urbana, IL 61801, USA (email: \texttt{\{ghassam2,yekkehk2,kiyavash\}@illinois.edu}).}}

\maketitle

\begin{abstract}
We study a covert queueing channel (CQC) between two users sharing a round robin scheduler. Such a covert channel can arise when users share a resource such as a computer processor or a router arbitrated by a round robin policy. We present an information-theoretic framework to model and derive the maximum reliable data transmission rate, i.e., the capacity of this channel, for both noiseless and noisy setups. Our results show that seemingly isolated users can communicate with a high rate over the covert channel and demonstrate the possibility of significant information leakage and privacy threats brought by CQCs in round robin schedulers. Moreover, we propose practical finite-length code constructions, which achieve the capacity limit.
\end{abstract}

\begin{IEEEkeywords}
Covert Queueing Channel, Round Robin Scheduler, Capacity Limit.
\end{IEEEkeywords}

\IEEEpeerreviewmaketitle

\section{Introduction}

\IEEEPARstart{S}{hared} resources among users in a system can lead to the occurrence of communication channels, which were not intended to exist in the system in the original design. 
Multiple users running on a computer who are using hardware resources such as CPU, storage, and multiple network streams flowing through a common router are examples of environments in which these channels can be created.
Such channels are mainly used by a malicious user for gaining access to other users' private information, and referred to as side channels
\cite{gong2011information, gong2014quantifying,kadloor2010low}.
For instance, the attacker can have an estimation of the features of the other users by analyzing their traffic.
Previous work shows that through traffic analysis, the attacker can obtain various private information including 
exact schedules of real-time systems \cite{chen2015schedule, chen2017reconnaissance},
visited web sites \cite{liberatore2006inferring}, sent keystrokes \cite{song2001timing}, and even inferring spoken phrases in a voice-over-IP connection \cite{wright2010uncovering}.

The coupling through shared resources can also be exploited for furtive communication among users. This type of channel is called a covert channel in the literature. Covert channels have typically been used by trusted insiders or malwares with access to secret information to leak it to untrusted outsiders \cite{tahir2015sneak, murdoch2005embedding, llamas2005evaluation, kang1996network}. 
In most of the work in the area of covert channels, two users communicate by modulating the timings, and the receiver sees a noisy version of the transmitter's inputs \cite{anantharam1996bits, ghassami2018covert, soltani2015covert, soltani2016covert, mukherjee2016covert}. Also, there are many works devoted to the detection of such channels \cite{cabuk2004ip,wsj2011,gianvecchio2007detecting}.
Unlike side channels, users in covert channels collaborate with each other and can agree on a specific usage pattern to efficiently utilize the features of the shared resource.

The focus of this paper is on the covert queueing channel (CQC), a special type of covert channel, that appears as a result of sharing a job scheduler among users. In a CQC, information is transmitted between the users through the delays that they experience when sending jobs to the shared scheduler.
More specifically, due to the inter-dependencies between delays observed by users, if one user experiences delays in service, the user becomes aware that the other users are issuing jobs \cite{ghassami2018covert}. Different scheduling policies, such as first-come-first-served (FCFS), time-division-multiple-access (TDMA), round robin, etc., can be used for resource allocation among users. The optimal scheme for message transmission and the rate of communication between the users depends on the used policy.

For scheduling policies, there is a trade-off between throughput and security.
For throughput, as long as the rates at which users request the shared resource are within the system's capacity region, an effective scheduler should be able to respond to the users' requests in a stable fashion. Such a scheduler is called a throughput optimal scheduler.
From a throughput optimality point of view, TDMA scheduling policy decouples the serving times of the users and hence, causes significant delays in the service given to the users. Therefore, this policy is not throughput optimal. However, from the security point of view, since users' delays are independent of each other in TDMA, this policy is the most secure scheduling policy \cite{gong2014quantifying}. 
The CQC created among users when the scheduler is FCFS is studied in detail in \cite{ghassami2015capacity,ghassami2018covert}. Although this scheduler does not waste any resource and hence is throughput optimal, it allows users to communicate with an information rate as high as 0.8114 bits per time slot.

In this paper we focus on a round robin scheduler, which is another throughput optimal policy commonly used in computer processors and communication networks.
Kadloor et al. \cite{kadloor2015delay} showed that when dealing with queueing side channels, round robin scheduling policy is privacy optimal within the class of work-conserving policies.
We focus on covert channels created in this scheduler. 
We present an information-theoretic framework to describe and model the data transmission in this channel and calculate its capacity.

Our system model is depicted in Figure \ref{SystemModelfig}. In this model we have an encoder and a decoder user, represented by Alice and Bob, respectively. There is no direct communication channel between the users, but they share a round robin scheduler. Hence, the delays observed by users are correlated. Therefore, Alice can encode a message in her traffic pattern and Bob can estimate the message by estimating Alice's traffic pattern via the delays he experiences.
We show that users can communicate with an information rate of 0.6942 bits per time slot through the covert channel created between them in this system in the absence of noise. 
Additionally, we study the noisy version of this covert channel in which packets are dropped with a certain probability, and we compute the capacity as a function of packet drop probability.

Followings are the main contributions of this work.
\begin{itemize}
\item We obtain the optimum signaling scheme for the CQC with round robin scheduler, and show that the capacity of the CQC is approximately 0.6942 bits per time slot (Section \ref{sec:noiseless}).
\item We propose an optimal finite block length coding scheme both when codewords are of fixed and of variable lengths. Our results show that the rates of the proposed optimal coding schemes approach the capacity as the number of messages goes to infinity (Section \ref{Finite}).
\item We extend the model to a more realistic noisy case, and calculate the capacity for this case as well (Section \ref{Noisy}).
\end{itemize}

\section{System Model}
\label{SystemModel}

We consider the system model depicted in Figure \ref{SystemModelfig}. In this model,  a shared resource services jobs from two users, Alice and Bob, using round robin policy.
 In this depiction, each packet is marked by its arrival time. As shown in this figure, there is a feedback line from the shared resource to the users, which notifies them when their packet is served. This allows the users to infer the status of the head of their queue; that is, at each time, the users will be aware that which one of their packets will be served next.

 Time is assumed to be discretized into slots, and the scheduler can serve one packet in each time slot. We follow the common convention that the packets arrive at the beginning of time slots and the departures occur at the end of time slots \cite{srikant2013communication,xie2016scheduling}. Each user's packets are buffered in a separate queue, and the round robin scheduler picks packets from the two queues as follows. In each time slot, three cases may happen:
(a) If both users' arrival queues are empty, the system remains idle and resumes scheduling in the next time slot.
(b) If only one user's queue has a packet, the current slot is given to that user, and the scheduler continues scheduling in the next time slot.
(c) If both users have waiting packets, the scheduler always gives priority to a fixed user. That is, the current time slot is allocated to serve a packet from the user with higher priority, and the next time slot will be allocated to the other user. The system continues scheduling after both users have received service. Without loss of generality, we assume that the priority is always given to Bob in the sequel.

We assume both Alice and Bob send at most one packet per time slot. Thus, their packet stream can be modeled as a binary bit stream, where bit `1' indicates a packet was sent, and bit `0' indicates no packet was transmitted. Since the scheduler can serve at most one packet per time slot, the sum of users' packet rates should be less than one for stability,
that is, $\lambda_1 + \lambda_2 < 1$, where $\lambda_1$ and $\lambda_2$ denote Alice and Bob's packet rates, respectively. (See Appendix \ref{AppendixA} for the proof of stability.)

\begin{figure}[t]
\centering
\includegraphics[scale=0.5]{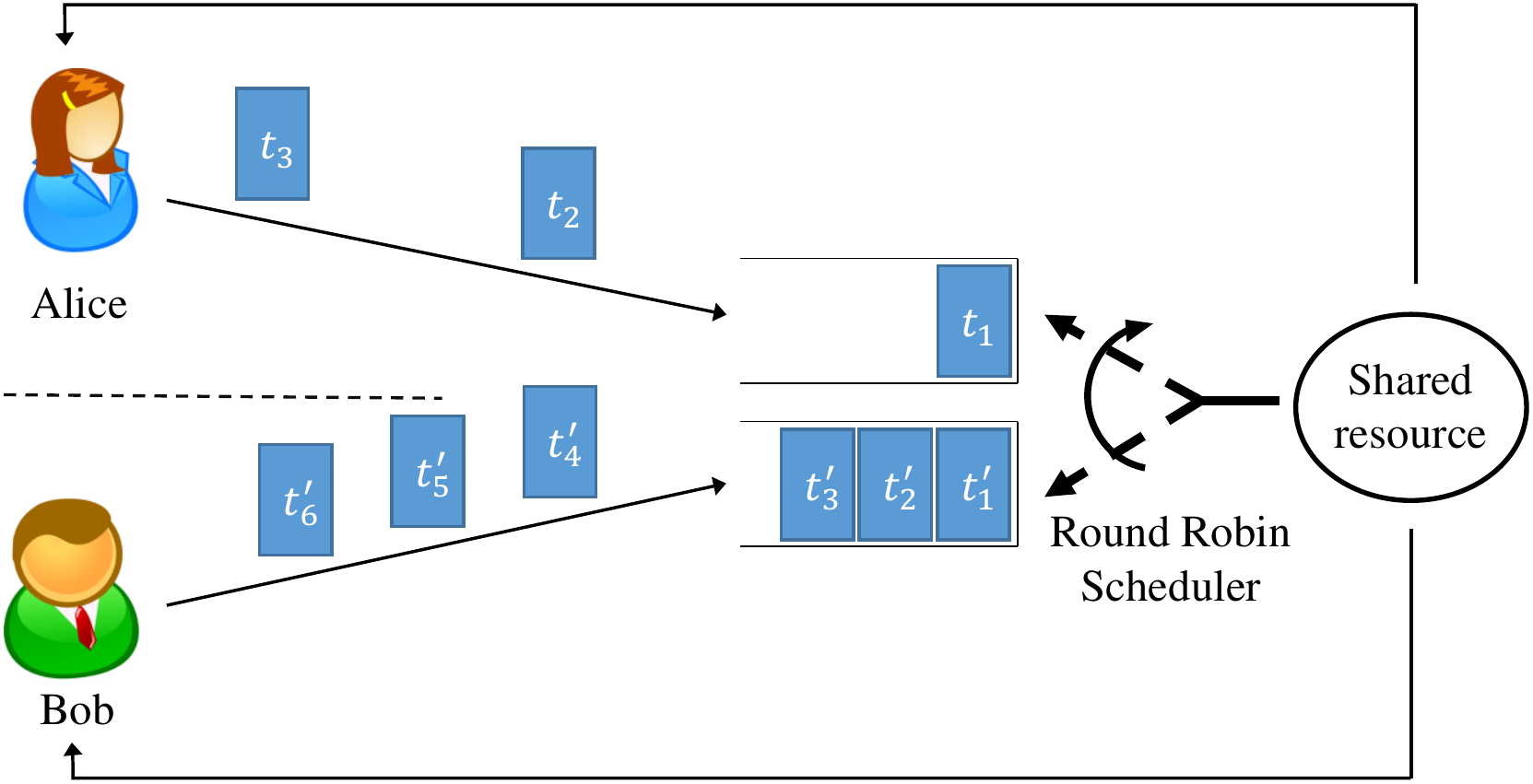}
\caption{System Setup: Alice and Bob share a resource arbitrated in round robin fashion. Users get acknowledgments when their packets are served.}
\label{SystemModelfig}
\end{figure}

Figure \ref{introexample} depicts an example of the scheduling in the system. In this and other such figures, Alice's and Bob's packets are shown by circled tip and regular arrows, respectively. For each user, the arrival stream, the head-of-the-queue stream and the departure stream are shown. Here, the arrival stream is the actual packet stream sent by the user, and the head-of-the-queue stream is the packets ready to be served at the head of the corresponding user's queue. Therefore, at any given time slot, the head of the queue can be `1' even though no packet has arrived in that slot. 
The downward streams in Figure \ref{introexample}(c) indicate the departure time of users' packets.
A packet is issued by both Alice and Bob in time slot 3. Since the priority is with Bob, his packet is processed in the same time slot, and Alice's packet is delayed by one time slot. Hence, Alice's head-of-the-queue stream contains a packet in both time slots 3 and 4. In Figure \ref{introexample}(b), the packet denoted by the gray dashed line indicates that it has been the same as its previous packet, which has been made to wait in the queue for one time slot to receive service in the next time slot. Alice's packet is processed in time slot 4, causing a delay for Bob's new packet sent in this time slot.

Suppose Alice aims to send message $W$ drawn uniformly from the set $\{1,2,\dots,M\}$. To this end, Alice encodes this message to a bit stream $X^m$ of length $m$, which is the codeword corresponding to message $W$. This codeword is sent out as a binary arrival stream $A_A^n$ of length $n$. In the same $n$ time slots, based on scheduling policy and both Alice's and Bob's packet arrivals, Bob receives a binary acknowledgment stream from the system, which is denoted by $D_B^n$. Finally, Bob transforms this stream to a bit stream $Y^m$ which will be decoded to message $\hat{W}$. As a result, we have the following Markov chain:
\begin{equation}
\label{MarkovChain}
W \rightarrow X^m \rightarrow A_A^n \rightarrow D_B^n \rightarrow Y^m \rightarrow \hat{W}
\end{equation}
The noise in the system is modeled as follows. The packets generated by either Alice or Bob may be dropped in the link between the users and the shared resource with probability $\delta$. Note that this noise can affect the transmissions in $X^m \rightarrow A_A^n$ and $A_A^n \rightarrow D_B^n$ in Markov chain \eqref{MarkovChain}. In Section \ref{sec:noiseless} we will obtain the optimum signaling scheme between Alice and Bob, and will show that using this scheme, noise will not affect the transmission $A_A^n \rightarrow D_B^n$.

\section{Coding Theorem}
\label{sec:noiseless}

In this section we obtain the optimum signaling scheme between the users and calculate the capacity of the introduced covert channel. 
The performance metric used in defining the capacity is the average error probability, defined as follows.
\[
P_e \triangleq P(W \neq \hat{W}) = \sum_{m = 1}^{M} \frac{1}{M} {P} (\hat{W} \neq m | W = m).
\]
Next, we define concepts required for the coding theory.
These definitions are natural extensions of the classical definitions in information theory \cite{anantharam1996bits,cover2012elements,csiszar2011information}.
\begin{definition}
An $(n, M, \epsilon)$-code consists of a codebook of size $M$ with equiprobable binary codewords of average length $n$ satisfying $P_e \leq \epsilon$.
\end{definition}

\begin{definition}
The information transmission rate of a code is
$R = \log M/n,$
which is the amount of conveyed information normalized by the average number of used time slots\footnote{Throughout the paper, all the logarithms are in base 2.}.
\label{Def2}
\end{definition}
\begin{definition}
A rate $R$ is said to be achievable if there exists a sequence of $(n, M, \epsilon_{n})$-codes such that $\epsilon_{n} \rightarrow 0$ as $n \rightarrow \infty$.
\end{definition}
\begin{definition}
The channel capacity is the supremum achievable rate at which Alice can communicate through the covert channel with Bob.
\end{definition}
We first obtain the optimum signaling scheme between the users, which  maximizes the information transmission.
\begin{lemma}
\label{lem:hotq}
In the CQC in Figure \ref{SystemModelfig}, the maximum information transmission rate between the users is achieved when Bob's head-of-the-queue bit stream is always equal to `1'.
\end{lemma}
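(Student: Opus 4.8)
The plan is to show that any signaling strategy can be modified, without loss of information rate, into one in which Bob keeps his head-of-queue occupied at every time slot, i.e.\ Bob feeds the scheduler a packet in every slot. The intuition is that Bob is the receiver: his own arrival stream $A_B^n$ is not constrained to carry Alice's message, so he is free to choose it in whatever way maximizes what he learns about $A_A^n$. Since Bob has priority, whenever both queues are nonempty Bob is served immediately and Alice is delayed by exactly one slot; whenever only Alice has a packet, Alice is served with no delay. Thus the acknowledgment stream $D_B^n$ that Bob observes is, slot by slot, a deterministic function of whether Alice's head-of-queue is occupied and whether Bob's head-of-queue is occupied. If Bob's head-of-queue is \emph{always} `1', then in every slot Bob is served and Alice is served one slot later exactly when Alice has a waiting packet; consequently the pattern of slots in which Bob is \emph{not} acknowledged (or, equivalently, the timing with which Bob's backlog grows) reveals Alice's head-of-queue stream completely. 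Keeping Bob's queue always loaded therefore gives Bob the finest possible observation of Alice's activity.

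The key steps, in order, would be: (i) Fix an arbitrary $(n,M,\epsilon)$-code, which specifies Alice's encoder $W\mapsto X^m\mapsto A_A^n$ and Bob's arrival stream and decoder; by the Markov chain \eqref{MarkovChain} the error probability depends on the joint law of $(W, D_B^n, A_B^n)$. (ii) Define a new strategy $\widetilde{A}_B^n$ in which Bob sends a packet in every slot (so his head-of-queue is identically `1'), while Alice's encoder is unchanged. (iii) Show that from the new observation $\widetilde D_B^n$ together with the known (deterministic) stream $\widetilde A_B^n$, Bob can reconstruct Alice's head-of-queue stream, and hence can simulate any function he could have computed under the old strategy --- formally, that $A_A^n \to \widetilde D_B^n$ loses no information relevant to decoding $W$, giving $I(W;\widetilde D_B^n)\ge I(W; D_B^n)$, so the same message set is decodable with error at most $\epsilon$. (iv) Conclude that the supremum of achievable rates is attained (in the limit) within the class of strategies with Bob's head-of-queue always `1', which is the claim. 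Because the number of used time slots $n$ is unchanged by the modification, the rate $R=\log M/n$ is preserved.

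The main obstacle I expect is step (iii): making precise the claim that ``Bob can recover Alice's head-of-queue stream.'' One has to track the head-of-queue dynamics carefully. With Bob's queue always backlogged, in each slot the scheduler serves Bob, and then serves Alice in the immediately following slot iff Alice's head-of-queue was `1' in the current slot; Alice's head-of-queue in slot $t+1$ is `1' iff she had a packet waiting (either a fresh arrival or a leftover), and since she is served exactly one slot behind her demand, the acknowledgments Alice would receive --- which Bob does not see directly --- are determined by her arrival stream. The subtlety is that Bob observes only \emph{his own} acknowledgment stream $D_B^n$; I would need to argue that the slots in which Bob's acknowledgment is delayed (pushed to the next slot because Alice grabbed a turn) are in bijection with the slots in which Alice had a waiting packet, so that $D_B^n$ determines $A_A^n$'s induced head-of-queue sequence, and a fortiori any quantity a decoder could extract. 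A clean way to do this is to exhibit an explicit decoding map and verify it by induction on $t$ over the three scheduler cases (a)--(c), using that case (a) never occurs once Bob is always backlogged and that Alice's own feedback line pins down her head-of-queue recursion. Care is also needed with the noisy version: one must note that this lemma is stated for the noiseless setup, and that (as the paper remarks) under this optimal scheme the transition $A_A^n\to D_B^n$ becomes noiseless even when $\delta>0$, which is exactly why loading Bob's queue is optimal there too --- but that observation belongs to the later sections, not to the proof of this lemma.
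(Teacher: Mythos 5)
Your approach is genuinely different from the paper's. The paper proves Lemma~\ref{lem:hotq} by direct rate analysis: it parameterizes Bob's head-of-the-queue stream as a `1' followed by $t$ `0's, observes that over such a block Bob can distinguish at most two outcomes for $D_B$ (delayed-by-one versus not), so the rate contribution is at most $\frac{\log 2}{(t+1)+p_{t+1}}$, and notes this bound is maximized at $t=0$. Your proof instead is a reduction/domination argument: fix Alice's encoder, replace Bob's arrival stream by the all-ones stream, and argue that the new observation $\widetilde{D}_B^n$ is at least as informative as the old one. The paper's route buys simplicity --- it never has to compare two different scheduler trajectories; your route, if completed, buys something slightly stronger (a strategy-by-strategy domination, not just a rate bound), and is in the spirit of physically-degraded-channel arguments.

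The gap you yourself flag at step (iii) is real and is the crux. The difficulty is not merely ``showing Bob can recover Alice's head-of-queue stream'' under the all-ones strategy. It is that when Bob switches strategies, Alice's own service times change, so Alice's head-of-queue process is \emph{a different random process} in the two experiments; recovering the new head-of-queue stream does not by itself imply you can simulate the old observation $D_B^n$. What you actually need is the factorization claim: the map $A_A^n \mapsto D_B^n$ (for arbitrary $A_B^n$) factors through $\widetilde{D}_B^n$, i.e.\ whenever two arrival streams $A_A^n$ produce the same $\widetilde{D}_B^n$ under the all-ones Bob, they also produce the same $D_B^n$ under every other Bob strategy. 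This appears to be true (you can check small cases, e.g.\ `11' versus `101'), but it requires an induction on $t$ comparing the two queue trajectories, and it is nowhere near as immediate as your sketch suggests. The paper's argument sidesteps this entirely by bounding information per block rather than comparing trajectories. Two smaller remarks: (1) your observation that $n$ is unchanged is fine \emph{because} you fix Alice's encoder, but you should make explicit that the lemma's claim about the supremum rate then follows by separately optimizing Alice against the fixed all-ones Bob; (2) your final paragraph about the noisy case is correct but, as you note, outside the scope of this lemma.
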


See Appendix \ref{app:hotq} for a proof.\\

The requirement that Bob should have a packet in his head-of-the-queue at all time slots does not mean that he has to send a packet in all time slots. It suffices for him to fix his queue length at some nonzero length, and whenever one of his packets is served, he generates a packet to ensure his queue length remains nonzero. This strategy allows him to keep the sum rate of arrivals from Alice and Bob less than 1 and keep the system stable.

As stated in the proof of Lemma \ref{lem:hotq}, in an optimal signaling, Alice can signal two distinguishable patterns to Bob. These patterns require Alice to have either a `0' or a `1' in her head-of-the-queue stream. We will next demonstrate the arrival stream required for these head-of-the-queue streams.

For Alice to have a `0' in her head-of-the-queue stream, assuming the queue is empty, she needs to idle for one time slot. Therefore, $A_A$ should contain a `0'. If Alice has a `1' in her head-of-the-queue stream, two time slots are required for this signal to be transmitted (one for Bob's packet and one for Alice's). Therefore, $A_A$ should contain a `10'.
Note that Alice should not send two packets in two consecutive time slots. This is because if Alice sends two (or more) packets in consecutive time slots, her next (or more) `0'(s) would disappear as her packets are accumulated in the queue. This is demonstrated in Example \ref{example111}.

\begin{figure}[t]
\centering
\includegraphics[scale=0.85]{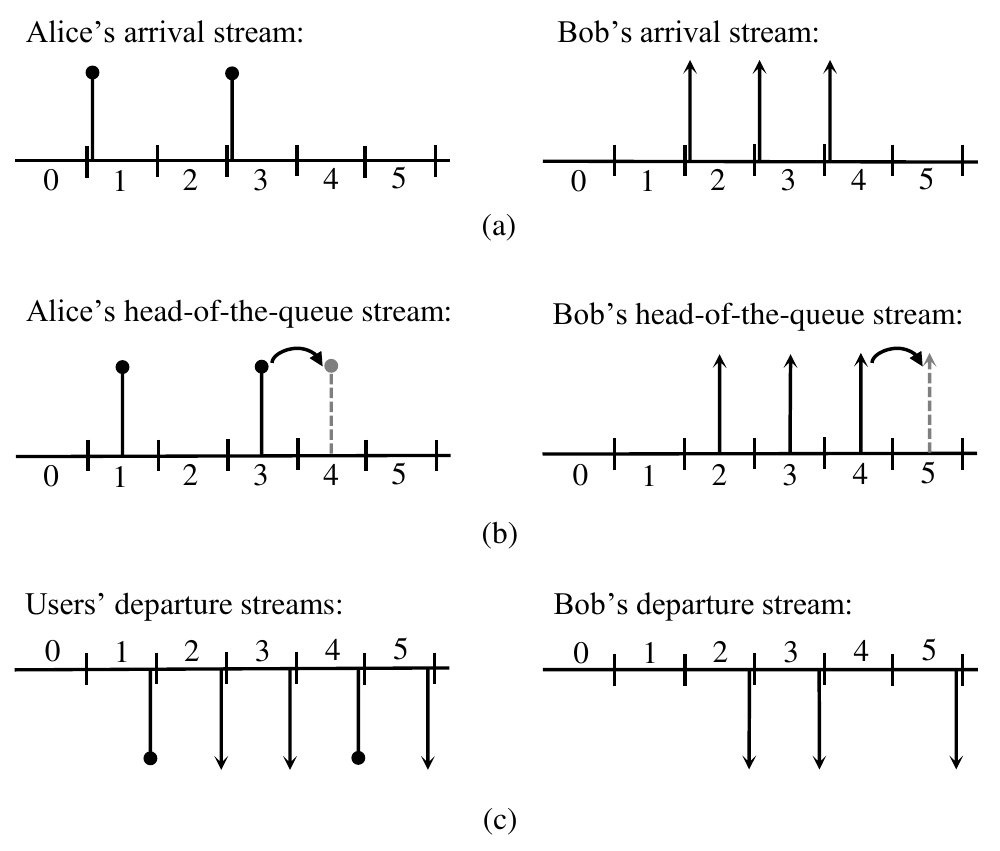}
\caption{(a) Arrival streams; (b) Head-of-the-queue streams; (c) Departure streams.}
\label{introexample}
\end{figure}

\begin{example}
\label{example111}
Assume that Alice's queue is empty and she wants to transmit the bit stream $1101$ to Bob. If Alice sends `1's in her message consecutively in each time slot (as depicted in Figure \ref{fig:BB}(a)), Bob would erroneously decode the message `111'. This is caused by the accumulation of packets in Alice's queue, stemming from existence of a packet at the head of her queue before clearing the previous `1'. Figure \ref{fig:BB}(b) depicts the correct signaling by Alice.
\end{example}
Therefore, the optimum signaling scheme from Alice to Bob which maximizes the information transmission rate would be summarized as follows:
\begin{itemize}
\item Signaling bit `1':
To signal bit `1' in time slot $n$, Alice must have a head-of-the-queue packet at the beginning of the time slot.
Recall that Bob has a ready-to-be-served packet in all time slots. Thus, round robin policy will serve Bob and Alice at time slots $n$ and $n+1$, respectively. Therefore, when Bob receives service in a time slot but does not receive service in the next time slot, he decodes bit `1'.
\item Signaling bit `0':
To signal bit `0' in time slot $n$, Alice must not have a head-of-the-queue packet at the beginning of the time slot.
Because Bob has a packet which is ready to be serviced in the head of the queue in this time slot, he receives service at time slot $n$, and the scheduler resets for time slot $n+1$.
As a result, at time slot $n + 1$, Bob is served again. Therefore, if Bob receives service in two consecutive time slots, he decodes it as bit `0'.
\end{itemize}
This scheme implies the following lemma.
\begin{lemma}
\label{lem:frac}
In the CQC in Figure \ref{SystemModelfig}, in the scheme with the maximum information transmission rate between the users, we have
\[
\frac{m}{n}=\frac{1}{1+p},
\]
where $p=P(X=1)$.
\end{lemma}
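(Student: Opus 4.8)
The plan is to convert the optimal signaling rule just described into a direct slot-counting identity. That rule pins down, for each message bit, exactly how many positions of the arrival stream $A_A^n$ it occupies: transmitting a `1' requires the arrival pattern `10' (Bob is served, then Alice's held-back packet is served), costing two time slots, whereas transmitting a `0' requires a single `0' in $A_A$ (Alice idles while Bob is served), costing one time slot. Because every `1'-pattern already ends in a `0', and because after any `0' Alice's queue is again empty, concatenating the per-bit patterns into a codeword's arrival stream involves no overlap and no shortcut (for instance `1' then `0' gives `100', and `0' then `1' gives `010'); the slot cost is therefore additive over the $m$ message bits. Hence a codeword $X^m$ containing $k$ ones produces an arrival stream of length $n = 2k + (m-k) = m + k$.

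With this identity in hand I would average over the equiprobable codebook. By Definition~\ref{Def2}, $n$ denotes the average codeword length, and $p = P(X=1)$ is the probability that a message bit equals `1', so the expected number of ones in a length-$m$ codeword is $mp$; averaging $n = m + k$ then gives $n = m(1+p)$, and dividing yields $m/n = 1/(1+p)$. Equivalently, for a capacity-achieving sequence of codes the AEP forces the empirical fraction of ones in almost every codeword toward $p$, so $n/m \to 1+p$ and the same ratio is obtained in the limit.

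I do not anticipate a substantive obstacle; the points needing care are purely the bookkeeping ones underlying the additivity claim. One should verify that the per-bit arrival patterns truly cannot be compressed under concatenation, which is exactly the observation (illustrated in Example~\ref{example111}) that Alice must never issue packets in two consecutive arrival slots --- this is precisely why each `1' carries its own trailing `0', and why the queue returns to empty before the next bit's pattern begins. One should also note that the claimed relation concerns the length of the arrival stream $A_A^n$, which is exactly $2k+(m-k)$; whatever tail convention is adopted to make the map $D_B^n \to Y^m$ unambiguous at the end of a codeword contributes at most an $O(1)$ correction and does not change the average $m/n$. Once these are dispatched, the lemma is immediate from the counting identity.
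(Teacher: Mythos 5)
Your proposal is correct and follows essentially the same route as the paper's proof: the key observation in both is that under the optimal signaling scheme a message bit `1' costs two arrival slots (`10') while a `0' costs one, so a length-$m$ codeword with $k$ ones produces an arrival stream of $n = m + k$ slots, and averaging gives $m/n = 1/(1+p)$. The paper's proof consists of just this slot-count and the one-line computation $m/n = m/(2pm + (1-p)m)$; your version supplies the additional bookkeeping (why per-bit costs concatenate additively because Alice's queue is empty at each bit boundary, and why the trailing-`0' convention rules out any compression) that the paper leaves implicit but which is indeed the content of Example~\ref{example111}. The AEP alternative you mention at the end is a reasonable second justification of the averaging but is not needed given the direct computation.
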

\begin{proof}
According to the optimum signaling scheme, sending a bit `0' and a bit `1' require 1 and 2 time slots respectively. Therefore,
\[
\frac{m}{n}=\frac{m}{2pm+(1-p)m}=\frac{1}{1+p}.
\]
\end{proof}
Equipped with Lemma \ref{lem:frac}, we next calculate the capacity of the introduced covert channel.
\begin{theorem}
\label{theorem111}
The capacity of the introduced CQC in a shared round robin scheduler in Figure \ref{SystemModelfig} is
\begin{equation}
\label{cap1}
C =\max_{0\le p\le1} \frac{h(p)}{1 + p},
\end{equation}
where $p$ is the probability of sending message bit `1' by Alice and $h(\cdot)$ is the binary entropy function.
\end{theorem}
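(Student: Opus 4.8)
The plan is to establish the capacity formula in two directions: achievability and converse, both leveraging the structural results already in hand. By Lemma \ref{lem:hotq} we may restrict attention to signaling schemes in which Bob's head-of-the-queue bit is always `1', and by the subsequent discussion the resulting channel from Alice's message bits $X^m$ to Bob's decoded bits $Y^m$ is \emph{noiseless}: a `0' is transmitted in one time slot and a `1' in two time slots, and Bob recovers each bit exactly from the pattern of consecutive services he observes. Thus the problem reduces to a classical noiseless coding problem with letter costs, where letter `0' costs $1$ time slot and letter `1' costs $2$ time slots, and $n$ is the (expected) total cost of a codeword.

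For achievability, I would invoke the standard argument for channels with letter costs (as in \cite{csiszar2011information,cover2012elements}). Fix $p = P(X=1)$ and draw codeword bits i.i.d. Bernoulli$(p)$. A random codebook of $M = 2^{m(h(p)-\epsilon')}$ messages, each a length-$m$ i.i.d. string, has the property that by the AEP / law of large numbers the realized length in time slots concentrates around $n = m(2p + (1-p)) = m(1+p)$, and since the channel $X^m \to Y^m$ is the identity map, the error probability is zero once distinct codewords are distinct strings, which happens with high probability as long as $m\log M$-type counting is respected — in fact since the map is noiseless we simply need the codewords to be distinct, so any $M \le 2^{m}$ works with zero error, and typicality is only used to pin down $n/m \to 1+p$. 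This yields rate $\log M / n \to h(p)/(1+p)$ for the optimal choice of codebook size $M = 2^{m h(p)}$-worth of typical sequences, and taking the max over $p$ gives the lower bound $C \ge \max_{0\le p\le 1} h(p)/(1+p)$.

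For the converse, consider any sequence of $(n, M, \epsilon_n)$-codes with $\epsilon_n \to 0$. Using Lemma \ref{lem:hotq} again to pass to the optimal-signaling regime, and Lemma \ref{lem:frac} to relate $m$ and $n$ via $n = m(1 + p_m)$ where $p_m = P(X = 1)$ under the (equiprobable-message-induced) codeword distribution, I would apply Fano's inequality along the Markov chain \eqref{MarkovChain}: $\log M = H(W) \le I(W; \hat W) + 1 + \epsilon_n \log M \le I(X^m; Y^m) + 1 + \epsilon_n\log M \le H(X^m) + 1 + \epsilon_n \log M \le \sum_{i=1}^m H(X_i) + 1 + \epsilon_n \log M \le m\, h(\bar p_m) + 1 + \epsilon_n\log M$, where $\bar p_m = \frac1m\sum_i P(X_i = 1)$ and the last step is concavity of $h$. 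Dividing by $n = m(1+\bar p_m)$ (again via Lemma \ref{lem:frac}, which gives exactly this averaged relation) and letting $n \to \infty$, the $\epsilon_n$ and $1/n$ terms vanish and $R = \log M / n \le h(\bar p_m)/(1 + \bar p_m) \le \max_{0\le p \le 1} h(p)/(1+p)$. Hence $C \le \max_p h(p)/(1+p)$, matching the achievability bound.

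The main obstacle I anticipate is handling the \emph{variable} codeword length cleanly in the converse: $n$ is defined as the \emph{average} number of used time slots, and the per-codeword length is itself a random quantity determined by the number of `1's, so the identity $n = m(1+p)$ of Lemma \ref{lem:frac} must be interpreted as a statement about expectations and then combined correctly with Fano's inequality (which bounds $\log M$, a fixed number) and with Jensen/concavity to land on a single effective $p$. One must be careful that the maximization in \eqref{cap1} is over a closed interval so the supremum is attained, and that $h(p)/(1+p)$ is continuous on $[0,1]$ (with value $0$ at both endpoints), so the $\max$ is well-defined; a short argument that the optimizing $p^*$ is interior (it is the solution of $(1+p)h'(p) = h(p)$, giving numerically $p^* \approx 0.6071$ and $C \approx 0.6942$) can be appended, though it is not strictly needed for the statement as written.
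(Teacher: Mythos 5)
Your proposal is correct and follows essentially the same approach as the paper: random-coding achievability with expected per-bit cost $1+p$, and a converse built on $\log M = H(W) \le H(X^m) \le \sum_i H(X_i) \le m\,h(p)$ together with Lemma~\ref{lem:frac}. The one stylistic difference is that you route the converse through Fano's inequality, whereas the paper uses the direct identity $H(W)=H(X^m)$ (injective encoding) for the noiseless case and reserves Fano for the noisy Theorem~\ref{theorem2}; both are sound, and your extra care about the variable-length/expectation interpretation of $n$ is a reasonable addition. Minor numerical slip in your closing aside: the maximizer is $p^* = \frac{3-\sqrt{5}}{2}\approx 0.382$, not $\approx 0.6071$, though your capacity value $\approx 0.6942$ is correct.
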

The proof of Theorem \ref{theorem111} is presented in Appendix \ref{app:noiseless}. The maximum of \eqref{cap1} is approximately $0.6942$ achieved at $p = \frac{3 - \sqrt{5}}{2}$.

\begin{figure}[t]
\centering
\includegraphics[scale=0.8]{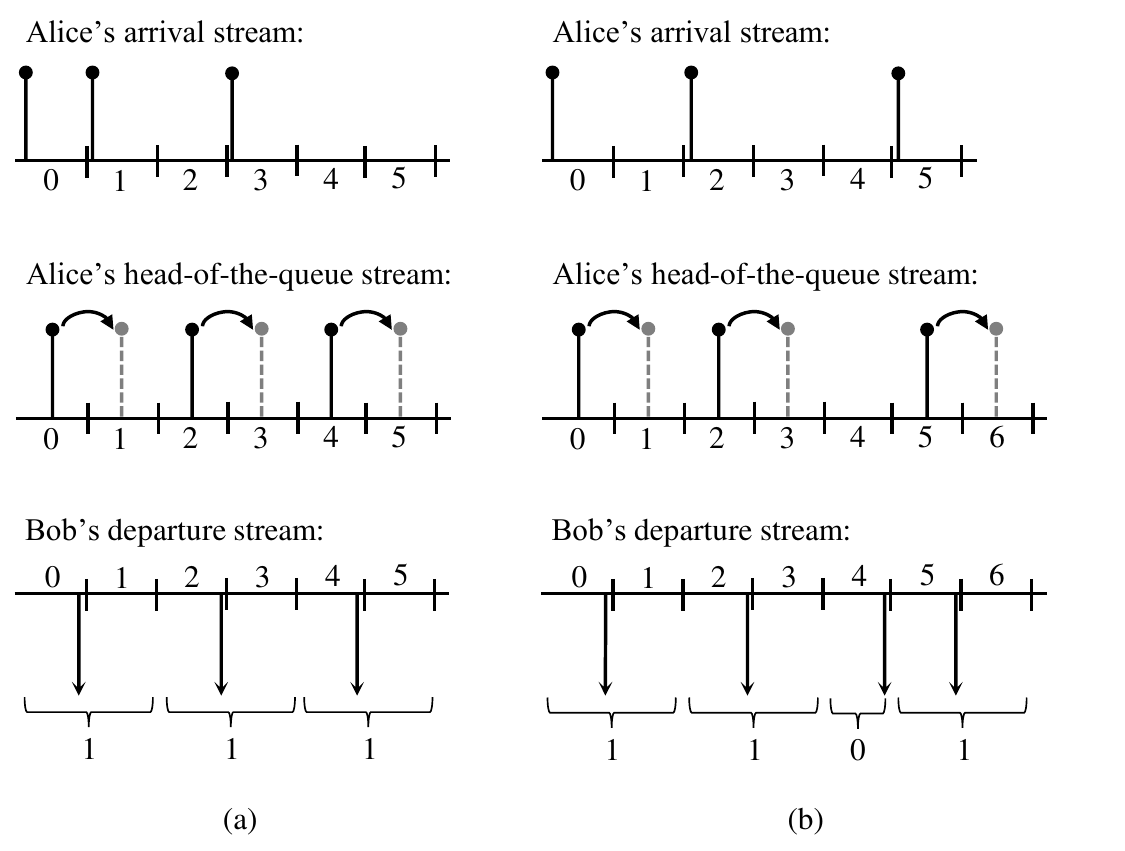}
\caption{
Visualization of Example \ref{example111}.}
\label{fig:BB}
\end{figure}

\section{Finite-length Codeword Regime}
\label{Finite}

We proposed an achievable scheme in the proof of Theorem \ref{theorem111}; however, this scheme requires the value of $n$ to tend to infinity to achieve the capacity.
In this section we obtain the optimum coding schemes in finite-length codeword regime for communication in the introduced covert queueing channel. Our proposed schemes achieve information rates close to the capacity even with small codebooks.

As mentioned earlier, Alice encodes each message to a binary sequence and creates a codebook $\mathcal{C}$, known to both Alice and Bob. The codewords in the codebook could be all of the same or different lengths.
In the following two subsections, we will consider both these scenarios and find the optimum codebook for the setting. 

\subsection{Variable-length Codewords}

In this subsection, for any fixed number of messages $M$, we propose an algorithm which generates the optimum variable-length codebook, i.e., the list of codewords that results in maximum communication rate between the users. By Definition \ref{Def2}, the information rate at which Alice communicates with Bob could be computed as 
\begin{equation}
\label{rate1}
\begin{aligned}
R = \frac{\log M}{\frac{1}{M}\sum_{m = 1}^{m = M}T_m},
\end{aligned}
\end{equation}
where $T_m$ is the transmission time of the $m$-th codeword. As we discussed in Section \ref{SystemModel}, in an optimum signaling scheme, transmission of bit `1' takes two time slots, while bit `0' requires one time slot. Denote the number of bits `0' and `1' in codebook $\mathcal{C}$ by $n_0(\mathcal{C})$ and $n_1(\mathcal{C})$, respectively. Therefore, $\sum_{m = 1}^{m = M}T_m = 2 n_1(\mathcal{C}) + n_0(\mathcal{C})$, and \eqref{rate1} could be rewritten as
\begin{equation}
\begin{aligned}
R = \frac{M \log M}{2 n_1(\mathcal{C}) + n_0(\mathcal{C})}.
\label{rateM}
\end{aligned}
\end{equation}

Given that $M$ is a fixed given parameter, maximizing the rate is equivalent to searching for a codebook which achieves the minimum of the denominator in \eqref{rateM}.

\begin{figure}[t]
\centering
\includegraphics[scale=0.55]{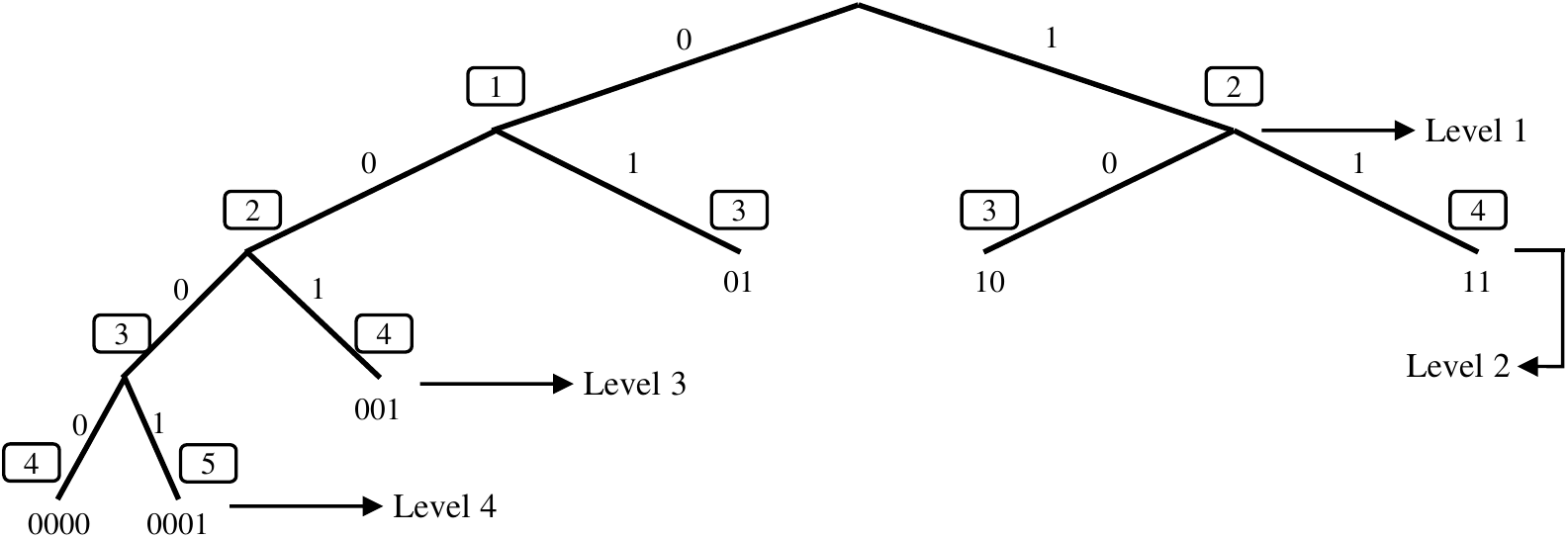}
\caption{A tree representation of the codewords. Codewords are the leaves of the tree. The cost of a codeword, defined to be its transmission time, is written in the boxes.}
\label{TreeCodeWord}
\end{figure}

Our technique for finding the optimum codebook is as follows. We represent each codeword in the codebook by a leaf in a tree, as depicted in Figure \ref{TreeCodeWord}. 
We define the cost of the codeword $\bar{X}$ as the number of time slots required for transmission of this codeword, and denote it by $\eta(\bar{X})$. That is, $\eta(\bar{X})=2 n_1(\bar{X}) + n_0(\bar{X})$, where $n_0(\bar{X})$ and $n_1(\bar{X})$ denote the number of bits `0' and `1' in $\bar{X}$, respectively.
The numbers in boxes in Figure \ref{TreeCodeWord} denote the cost of each codeword.
We call the resulting graph, the \textit{codeword tree}. In this representation, for each node, the branch to the left (right) side, appends a 0 (1) to the codeword corresponding to that node. For example, if a vertex represents the bit stream $00101$, its left and right children will represent codewords $001010$ and $001011$, respectively. The reason we use the leaves of a tree for representing the codewords is to guarantee that the codewords are uniquely decodable \cite{cover2012elements}. 

\begin{figure}[t]
\centering
\includegraphics[scale=0.44]{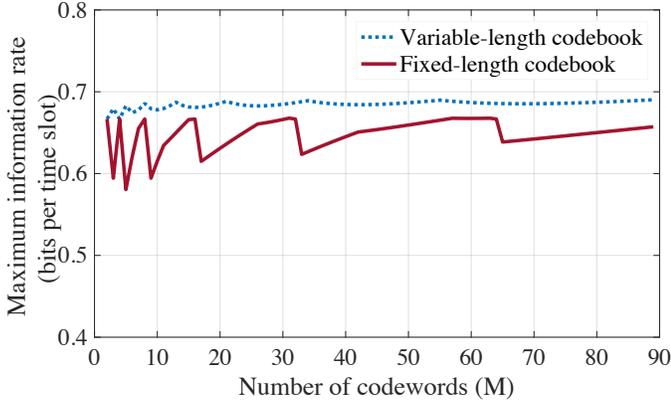}
\caption{The maximum rates at which Alice can communicate with Bob versus the number of codewords for two cases of variable- and fixed-length codebooks.}
\label{MaxRateRR2}
\end{figure}

Algorithm \ref{algo1} describes how the $M$ optimal codewords are selected from the tree. In this algorithm, we initialize the codebook to be $\{0,1\}$. In each iteration, one of the current codewords $\bar{X}$ with minimum cost is replaced with its two children $\bar{X}0$ and $\bar{X}1$. This procedure is repeated until all $M$ codewords are obtained. As an example,
the result of Algorithm \ref{algo1} for $M=6$ is depicted in Figure \ref{TreeCodeWord}.

\begin{algorithm}[t]
\caption{Finding the optimum variable-length codebook with $M$ codewords}
\label{algo1}
\begin{algorithmic}[1]
\STATE \textbf{Initialize} $\mathcal{C}=\{0, 1 \}$
\WHILE {$|\mathcal{C}|\le M$}
	\STATE Find $\bar{X}^*=\arg\min_{\bar{X}\in\mathcal{C}}\eta(\bar{X})$
	\STATE $\mathcal{C}=\mathcal{C}\setminus\{\bar{X}\}$
	\STATE $\mathcal{C}=\mathcal{C}\cup\{\bar{X}0\}\cup\{\bar{X}1\}$
\ENDWHILE
\STATE \textbf{Output} $\mathcal{C}$
\end{algorithmic}
\end{algorithm}

\begin{theorem}
\label{optcode1}
For a fixed given number of equiprobable messages, Algorithm \ref{algo1} is optimal in the sense that it provides a codebook which maximizes the communication rate between the users.
\end{theorem}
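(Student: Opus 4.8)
My plan is to turn the statement into a purely combinatorial optimization over weighted binary trees and then prove greedy optimality by an exchange argument. By \eqref{rateM}, for a fixed number of equiprobable messages $M$ the rate is maximized exactly when the quantity $2n_1(\mathcal{C})+n_0(\mathcal{C})=\sum_{\bar X\in\mathcal{C}}\eta(\bar X)$ — the total cost of the codeword tree — is minimized. Since the codewords are required to be uniquely decodable they are the leaves of a binary tree, and one may assume that tree is full (each internal node has two children), because contracting an internal node that has a single child lowers the cost of every codeword beneath it without changing $M$. So it suffices to show that, over all full binary trees with $M$ leaves, Algorithm~\ref{algo1} outputs one of minimum total leaf cost.

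The first step is a bookkeeping identity. Give every node $v$ of the infinite binary tree the value $\eta(v)$ defined by adding $1$ for each left edge and $2$ for each right edge on the root-to-$v$ path; note $\eta$ is \emph{strictly increasing} along every root-to-leaf path. A full binary tree with $M$ leaves has exactly $M-1$ internal nodes, which form a prefix-closed set (closed under taking parents) containing the root, and conversely every prefix-closed set of size $M-1$ arises this way. Splitting a leaf of cost $c$ into its two children changes the total leaf cost by $(c+1)+(c+2)-c=c+3$, so an easy induction on the number of splits gives
\[
\sum_{\ell}\eta(\ell)=3(M-1)+\sum_{v\in I}\eta(v),
\]
the sums being over the leaves and over the internal-node set $I$ of the tree. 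Hence minimizing the codebook cost is the same as choosing a prefix-closed set $I$ of $M-1$ nodes of least total $\eta$-value, and this is precisely what Algorithm~\ref{algo1} does: it starts from $I=\{\text{root}\}$ (codebook $\{0,1\}$) and, at each step, moves the cheapest current codeword (the cheapest node on the frontier of $I$) into $I$.

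It remains to prove this greedy growth is optimal. Let $m_1\le m_2\le\cdots\le m_{M-1}$ be the costs of the successive nodes the greedy adds; they are nondecreasing because each split produces children strictly costlier than the split node. The crux is the minimax claim: for every prefix-closed set $J$ with $|J|=i$ one has $\max_{v\in J}\eta(v)\ge m_i$. Indeed, monotonicity makes $U=\{v:\eta(v)<m_i\}$ prefix-closed, and one checks that the greedy must bring \emph{all} of $U$ into $I$ before it first selects a node of cost at least $m_i$ (otherwise its next pick would be a frontier node inside $U$, contradicting that it always picks the cheapest available); thus $|U|\le i-1$, so $J$ cannot lie inside $U$. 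Now take any prefix-closed $I^\star$ with $|I^\star|=M-1$; by strict monotonicity its $i$ cheapest members again form a prefix-closed set, so the minimax claim gives that the $i$-th smallest $\eta$-value in $I^\star$ is at least $m_i$. Summing over $i$ yields $\sum_{v\in I^\star}\eta(v)\ge\sum_i m_i=\sum_{v\in I_{\mathrm{greedy}}}\eta(v)$, and by the identity above this says the codebook produced by Algorithm~\ref{algo1} has minimum cost, hence maximum rate. The delicate point is exactly the minimax claim together with the two uses of monotonicity of $\eta$ along root-to-leaf paths (one needs strictness to see that the cheapest $i$ members of a prefix-closed set are again prefix-closed); an alternative to the minimax argument, which I would use if cleaner, is a direct exchange step — take an optimal $I^\star$ agreeing with the greedy on as long an initial run of picks as possible, and if it deviates, substitute the next greedy node (whose parent is already in $I^\star$) for a leaf of the $I^\star$-induced subtree lying off that run; monotonicity and minimality on the frontier give that the swap does not raise the cost while lengthening the agreement, a contradiction.
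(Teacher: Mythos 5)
Your proof is correct, and it is a tighter, more fully worked-out version of the same greedy-optimality idea that the paper uses. Both you and the paper start from the observation that splitting a leaf of cost $\eta(\bar X)$ adds $\eta(\bar X)+3$ to the total codebook cost. The paper then argues informally that the greedy split is optimal at each step and runs a loose step-by-step comparison between Algorithm~\ref{algo1} and a putative better algorithm (``find the subtree common to $\mathcal{C}_1$ and $\mathcal{C}_2$ \dots the first replacement in Algorithm~\ref{algo1} gives a smaller cost \dots''), which as written does not quite carry the exchange through (it is not clear that step-by-step local superiority yields global optimality without some form of the matroid/exchange structure). Your version supplies exactly the missing structure: you convert the problem, via the identity $\sum_{\ell}\eta(\ell)=3(M-1)+\sum_{v\in I}\eta(v)$, into selecting a prefix-closed set $I$ of $M-1$ internal nodes of minimum total $\eta$-value, then prove the minimax inequality $\max_{v\in J}\eta(v)\ge m_i$ for every prefix-closed $J$ of size $i$ and sum it over $i$. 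The two points you flag as delicate are indeed where the work is, and both go through: $\eta$ is strictly increasing along root-to-leaf paths, so the sublevel set $U=\{v:\eta(v)<m_i\}$ is prefix-closed and the greedy must absorb all of $U$ before picking any cost-$\ge m_i$ node (giving $|U|\le i-1$), and strict monotonicity also makes the $i$ cheapest members of any prefix-closed $I^\star$ prefix-closed. This is effectively a greedy-on-a-downward-closed-family argument that the paper's sketch is implicitly relying on but does not formalize; your formulation buys a clean, checkable proof, while the paper's is shorter but leaves the reader to fill the gap.
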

A proof of Theorem \ref{optcode1} is presented in Appendix \ref{app:optcode1}.\\

The maximum rate at which Alice can communicate with Bob versus the number of codewords $M$, is depicted in Figure \ref{MaxRateRR2}. The overall trend of the maximum communication rate increases as the number of codewords increases, and converges to the capacity. The following theorem formalizes this claim.

\begin{theorem}
\label{rateconv1}
The information transmission rate of a codebook created by Algorithm \ref{algo1} converges to the capacity of the covert channel as the number of messages goes to infinity.
\end{theorem}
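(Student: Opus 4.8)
The plan is to pin down the average codeword length of Algorithm~\ref{algo1}'s output and show it equals $\frac{\log M}{C}+O(1)$, where $C$ is the capacity in \eqref{cap1}; since the rate equals $\log M$ divided by this average length (see \eqref{rateM}), the convergence $R\to C$ is then immediate, because $\log M\to\infty$. Write $\eta(\bar X)=2n_1(\bar X)+n_0(\bar X)$ for the cost of a codeword as in the text, and $\bar\eta_M=\frac1M\sum_{m=1}^{M}\eta(\bar X_m)$ for the average length of the codebook returned for $M$ messages. The only property of $C$ I will use is the identity $2^{-C}+2^{-2C}=1$, i.e.\ $C=\log_2\phi$ with $\phi=\frac{1+\sqrt5}{2}$ the golden ratio --- this is exactly the closed form behind the value $0.6942$ and the maximizer $p=\frac{3-\sqrt5}{2}$ stated after Theorem~\ref{theorem111}, via $\phi^{-1}+\phi^{-2}=1$.

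I would first extract two structural facts about the codeword tree built by Algorithm~\ref{algo1}. (i) The minimum leaf cost $c^\ast$ is non-decreasing across iterations: replacing a minimum-cost leaf of cost $c$ by its children of costs $c+1$ and $c+2$ deletes a leaf of cost $c$ and inserts only leaves of strictly larger cost. Hence at termination \emph{every} leaf has cost in $[c^\ast,c^\ast+2]$ (a leaf inserted at a step whose minimum cost was $c'\le c^\ast$ has cost at most $c'+2\le c^\ast+2$; the two initial leaves have cost at most $2\le c^\ast+2$), so $c^\ast\le\bar\eta_M\le c^\ast+2$. (ii) Every binary string of cost at most $c^\ast-1$ is an \emph{internal} node of the final tree: arguing by induction on string length (the root is internal since $M\ge2$), such a string lies in the tree because its parent has cost at most $c^\ast-2$, hence is internal by the induction hypothesis, hence has both children present; and the string cannot be a leaf, since leaves have cost $\ge c^\ast$ by (i). Because the codeword tree is full, the number of leaves is one more than the number of internal nodes; combined with the fact that the $M$ leaves are distinct strings of cost $\le c^\ast+2$, this yields
\[
1+\#\{\text{binary strings of cost}\le c^\ast-1\}\ \le\ M\ \le\ \#\{\text{binary strings of cost}\le c^\ast+2\}.
\]

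Next I would count binary strings by cost. Letting $N(t)$ be the number of binary strings with $\eta=t$ (equivalently with $a$ zeros, $b$ ones, $a+2b=t$), splitting on the last bit gives $N(t)=N(t-1)+N(t-2)$ for $t\ge2$ with $N(0)=N(1)=1$, so $N(t)$ is a Fibonacci number and $\#\{\text{binary strings of cost}\le t\}=\sum_{j=0}^{t}N(j)=\Theta(\phi^{t})$. Plugging this into the displayed sandwich gives $M=\Theta(\phi^{c^\ast})$, hence $\log M=c^\ast\log_2\phi+O(1)=c^\ast C+O(1)$, and in particular $c^\ast\to\infty$ as $M\to\infty$. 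Combining with $c^\ast\le\bar\eta_M\le c^\ast+2$ yields $\bar\eta_M=\frac{\log M}{C}+O(1)$, whence $R=\log M/\bar\eta_M\to C$. (As a cross-check, for any prefix-free code the generalized Kraft inequality --- valid here precisely because $2^{-C}+2^{-2C}=1$ --- gives $\sum_m 2^{-C\eta(\bar X_m)}\le1$, so convexity of $x\mapsto 2^{-Cx}$ forces $\bar\eta_M\ge\frac{\log M}{C}$ and thus $R_M\le C$ for \emph{every} $M$, matching the converse in Theorem~\ref{theorem111}.)

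The crux is the pair of structural facts (i)--(ii): they control $\bar\eta_M$ for \emph{every} $M$, not merely along a convenient subsequence. This uniformity is essential because, as Figure~\ref{MaxRateRR2} illustrates, the optimal rate is not monotone in $M$; one therefore cannot simply exhibit a single sequence of good block codes (for instance fixed-length codewords of Hamming weight close to $\frac{3-\sqrt5}{2}$ of the length, whose rate tends to $C$) and invoke optimality of Algorithm~\ref{algo1} via Theorem~\ref{optcode1} for those $M$ alone --- the greedy ``split the cheapest leaf'' rule is exactly what forces the estimate to hold uniformly in $M$.
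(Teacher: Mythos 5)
Your proof is correct, but it takes a genuinely different and in fact tighter route than the paper. The paper's proof is a \emph{comparison} argument: it exhibits a particular family of non-optimal fixed-length codebooks (all codewords of length $l$ with exactly $\lfloor lp\rfloor$ ones, so $M={l\choose\lfloor lp\rfloor}$), shows that the rate of this family tends to $C$ as $l\to\infty$ by Stirling, and then relies on the optimality of Algorithm~\ref{algo1} (Theorem~\ref{optcode1}) together with the converse of Theorem~\ref{theorem111} to sandwich Algorithm~\ref{algo1}'s rate. You instead analyze Algorithm~\ref{algo1}'s output directly: by tracking the minimum leaf cost $c^\ast$ you show every leaf cost lies in $[c^\ast,c^\ast+2]$, every string of cost $\le c^\ast-1$ is internal, and then a Fibonacci-style count of strings by cost pins $c^\ast$ to $\log M/C+O(1)$, whence $\bar\eta_M=\log M/C+O(1)$ and $R\to C$. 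What your approach buys is precisely the point you make at the end: the paper's comparison is stated only for the subsequence $M_l={l\choose\lfloor lp^\ast\rfloor}$, and since the rate is not monotone in $M$ (Figure~\ref{MaxRateRR2}), that argument as written establishes convergence only along that subsequence rather than for all $M\to\infty$. (The paper's approach could be repaired by choosing, for each $M$, the smallest $l$ with ${l\choose\lfloor lp^\ast\rfloor}\ge M$ and restricting to $M$ of those codewords, but the paper does not do this.) Your structural bounds hold uniformly in $M$ and so prove the stated theorem as worded; your Kraft-inequality cross-check ($2^{-C}+2^{-2C}=1$, $\sum_m 2^{-C\eta(\bar X_m)}\le1$, hence $R_M\le C$ for every $M$) is also a nice self-contained replacement for the invocation of the Theorem~\ref{theorem111} converse. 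One small caveat worth stating explicitly if you write this up: your argument uses the closed form $C=\log_2\phi$, which follows from the cost-1/cost-2 prefix-code capacity identity and is consistent with the numerics in the paper, but the paper itself only reports the numerical value $0.6942$ and the maximizer $p^\ast=\tfrac{3-\sqrt5}{2}$; a one-line verification that $\max_p h(p)/(1+p)=\log_2\phi$ should be included.
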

A proof of Theorem \ref{rateconv1} is presented in Appendix \ref{app:rateconv1}.

\subsection{Fixed-length Codeword}
\label{FixedLengthCodeword}

In many applications, using variable-length codewords is not desirable from the designer's point of view. For example, in a noisy system, a variable-length scheme may lead to loss of synchronization between encoder and decoder. To obtain fixed-length codewords, all $M$ codewords must be selected from the same level of the codeword tree. Such a constraint on choosing codewords can lead to reduction in information rate for a fixed number of messages, however, as we shall see, these codes can still achieve the capacity when the length of the codewords goes to infinity.

\begin{algorithm}[t]
\caption{Finding the optimum fixed-length codebook with $M$ codewords}
\label{algo2}
\begin{algorithmic}[1]
\STATE \textbf{Initialize} $\hat{l} = \lceil log(M) \rceil$
\FOR {\emph{$l = \hat{l}$ to $2\hat{l}$}}
	\STATE $\mathcal{C}_l$ = Set of $M$ codewords with the least number of `0's in the $l$-th level of the codeword tree
	\STATE $\eta(\mathcal{C}_l) = n_0(\mathcal{C}_l) + 2 n_1(\mathcal{C}_l)$
\ENDFOR
\STATE $l^* = \arg \min_l \eta(\mathcal{C}_l)$
\STATE \textbf{Output} $\mathcal{C}_{l^*}$
\end{algorithmic}
\end{algorithm}

Our proposed approach for selecting the optimal fixed-length codebook for a given number of messages is presented in Algorithm \ref{algo2}.
Denote the cost of a codebook with $\eta(\mathcal{C})= n_0(\mathcal{C}) + 2 n_1(\mathcal{C})$, which is the sum of the cost of its codewords.
In Algorithm \ref{algo2} first the optimum codebooks with the least number of `1's should be chosen in each of the levels $\hat{l} = \lceil \log (M) \rceil$ to $2\hat{l}$. The optimal codebook is then the one with minimum cost among these created codebooks.

\begin{theorem}
\label{optcode2}
For a fixed given number of equiprobable messages, Algorithm \ref{algo2} outputs the optimal fixed-length codebook.
\end{theorem}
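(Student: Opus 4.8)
The plan is to split the statement into two independent facts. First: among all fixed-length codebooks of a \emph{prescribed} length $l$, the cheapest one is the set $\mathcal{C}_l$ consisting of the $M$ binary strings of length $l$ with the fewest ones (equivalently, smallest Hamming weight) — this is the set formed inside the loop of Algorithm~\ref{algo2}, in agreement with the prose description even though the pseudocode line misprints ``$1$'' as ``$0$''. Second: the optimal codeword length $l^*$ is never larger than $2\hat{l}$, so it suffices to scan $l\in\{\hat{l},\dots,2\hat{l}\}$ and keep the cheapest codebook. Throughout, since $M$ is fixed, \eqref{rateM} shows that maximizing the rate is the same as minimizing the codebook cost $\eta(\mathcal{C})=n_0(\mathcal{C})+2n_1(\mathcal{C})$, so I will only reason about costs.

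For the first fact, fix $l$. A length-$l$ codeword with $k$ ones has cost $2k+(l-k)=l+k$, so any length-$l$ codebook $\mathcal{C}$ satisfies $\eta(\mathcal{C})=Ml+n_1(\mathcal{C})$; minimizing the cost at level $l$ is thus the same as choosing $M$ distinct length-$l$ strings with the smallest total number of ones. A one-line exchange argument finishes this: if an optimal such codebook used a string of weight $w$ while some length-$l$ string of weight $w'<w$ were unused, swapping them preserves distinctness and strictly decreases $n_1(\mathcal{C})$, a contradiction; hence the minimum is attained by taking strings in nondecreasing order of weight (ties broken arbitrarily), i.e., by $\mathcal{C}_l$, with cost equal to the $\eta(\mathcal{C}_l)$ computed by the algorithm. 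Feasibility at each level considered is automatic since $l\ge\hat{l}=\lceil\log M\rceil$ forces $2^l\ge 2^{\hat{l}}\ge M$, and unique decodability is trivial because all codewords share the length $l$, so no prefix condition is needed here.

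For the second fact — the real content of the theorem — I would compare the cost at the shortest admissible level with the cost at any long level. At level $\hat{l}$ every codeword has at most $\hat{l}$ ones, so $\eta(\mathcal{C}_{\hat{l}})=M\hat{l}+n_1(\mathcal{C}_{\hat{l}})\le 2M\hat{l}$. For arbitrary $l$, every codeword costs at least $l$, so every length-$l$ codebook costs at least $Ml$. Hence for every $l>2\hat{l}$ we have $\eta(\mathcal{C}_l)\ge Ml\ge M(2\hat{l}+1)>2M\hat{l}\ge\eta(\mathcal{C}_{\hat{l}})$, so no length exceeding $2\hat{l}$ can be optimal. Therefore the global minimum of $\eta$ over admissible fixed-length codebooks is attained at some $l^*\in\{\hat{l},\dots,2\hat{l}\}$; since Algorithm~\ref{algo2} builds the per-level optimum $\mathcal{C}_l$ for each such $l$ and returns the one of least cost, it outputs an optimal fixed-length codebook.

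The only step I expect to require genuine care is the last one, namely justifying the finite search window $[\hat{l},2\hat{l}]$: one must rule out that lengthening the codewords indefinitely keeps lowering the cost. The crude bounds above already achieve this; a tighter window could be obtained by exploiting that most length-$\hat{l}$ strings have weight near $\hat{l}/2$, but this refinement is not needed. The remaining ingredients — the reduction to cost minimization, the per-level exchange argument, feasibility, and unique decodability — are routine.
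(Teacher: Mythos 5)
Your proof is correct and takes essentially the same approach as the paper's: both arguments boil down to the bound $\eta(\mathcal{C}_{\hat{l}})\le 2M\hat{l}$ (equivalently $R_{\hat{l}}>\log M/(2\hat{l})$) together with the trivial lower bound $\eta(\mathcal{C}_l)\ge Ml$ (equivalently $R_l<\log M/l$) to rule out every $l>2\hat{l}$. The one place you go further is in explicitly proving, via the exchange argument, that within a fixed level $l$ the cheapest codebook is the one taking the $M$ lowest-weight strings — a fact the paper's proof silently assumes (and which is needed to fix the typo you caught in line 3 of Algorithm~\ref{algo2}, where ``least number of `0's'' should read ``least number of `1's'').
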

A proof of Theorem \ref{optcode2} is presented in Appendix \ref{app:optcode2}.\\

As shown in Figure \ref{MaxRateRR2}, using Algorithm \ref{algo2} the overall trend of the maximum communication rate increases as the number of codewords increases, and converges to the capacity. The following theorem formalizes this claim.

\begin{theorem}
\label{rateconv2}
The information transmission rate of a codebook created by Algorithm \ref{algo2} converges to the capacity of the covert channel as the number of messages goes to infinity.
\end{theorem}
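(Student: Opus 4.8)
To prove Theorem \ref{rateconv2}, the plan is to establish the two inequalities $\limsup_{M\to\infty}R(M)\le C$ and $\liminf_{M\to\infty}R(M)\ge C$, where $R(M)$ denotes the rate of the codebook returned by Algorithm \ref{algo2} on input $M$. The first is the easy direction: a fixed-length codebook with $M$ distinct codewords is in particular a uniquely decodable codebook for $M$ equiprobable messages, so by Theorem \ref{optcode1} its rate cannot exceed that of the optimal variable-length codebook produced by Algorithm \ref{algo1} for the same $M$, and the latter converges to $C$ by Theorem \ref{rateconv1}; hence $\limsup_{M\to\infty}R(M)\le C$. (Equivalently: concatenating such a zero-error codebook with itself $k$ times gives, for every $k$, a code with $M^{k}$ messages and zero error probability achieving rate $R(M)$, so $R(M)$ is an achievable rate and $R(M)\le C$.) Everything then reduces to the achievability bound $\liminf_{M\to\infty}R(M)\ge C$.

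For achievability I would use the formula $R(M)=\dfrac{M\log M}{\eta(\mathcal{C}_{l^*})}$ with $\eta(\mathcal{C})=n_0(\mathcal{C})+2n_1(\mathcal{C})$ from \eqref{rateM}, together with the fact that Algorithm \ref{algo2} minimizes $\eta(\mathcal{C}_l)$ over the window $l\in\{\hat l,\dots,2\hat l\}$, $\hat l=\lceil\log M\rceil$; thus it suffices to exhibit, for every large $M$, one admissible level $l=l(M)$ in that window at which the $M$ lowest-weight length-$l$ strings form a low-cost codebook. Let $p^{*}=\tfrac{3-\sqrt5}{2}$ be the capacity-achieving input probability, and recall the key numerical fact $h(p^{*})\approx 0.959\in(\tfrac12,1)$. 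Take $l(M)$ to be the least integer with $\binom{l}{\lfloor p^{*}l\rfloor}\ge M$. Then $l\ge\log M$ because $\binom{l}{\lfloor p^{*}l\rfloor}\le 2^{l}$, so $l\ge\hat l$ automatically; and by Stirling's formula $\binom{l}{\lfloor p^{*}l\rfloor}=2^{\,l\,h(p^{*})+o(l)}$, which together with minimality of $l$ gives $l=\frac{\log M}{h(p^{*})}(1+o(1))$, so that $l<2\hat l$ for all large $M$ because $\frac{1}{h(p^{*})}<2$; hence the level is admissible. Since at least $\binom{l}{\lfloor p^{*}l\rfloor}\ge M$ length-$l$ strings have Hamming weight at most $\lfloor p^{*}l\rfloor$, the $M$ strings of smallest weight all have at most $\lfloor p^{*}l\rfloor$ ones, whence $n_1(\mathcal{C}_l)\le M p^{*}l$ and $\eta(\mathcal{C}_l)=Ml+n_1(\mathcal{C}_l)\le Ml(1+p^{*})$. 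Consequently
\[
R(M)\ \ge\ \frac{M\log M}{\eta(\mathcal{C}_l)}\ \ge\ \frac{M\log M}{Ml(1+p^{*})}\ =\ \frac{\log M}{l(1+p^{*})}\ =\ \frac{h(p^{*})}{1+p^{*}}\,(1+o(1))\ \to\ C,
\]
and combining this with the converse yields $R(M)\to C$.

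The hard part will not be conceptual but bookkeeping: I must make the estimate $\log\binom{l}{\lfloor p^{*}l\rfloor}=l\,h(p^{*})+o(\log M)$ rigorous (using that $h$ is Lipschitz near $p^{*}$, so $h(\lfloor p^{*}l\rfloor/l)=h(p^{*})+O(1/l)$, and that the polynomial factors in the standard binomial bounds contribute only $O(\log l)=O(\log\log M)$), and then verify that the level $l(M)$ provably lands inside the search window $[\hat l,2\hat l]$ for \emph{every} sufficiently large $M$ in spite of the floors and ceilings. The one structural fact that makes the window constraint work is $h(p^{*})>\tfrac12$: it says that the block length needed to carry $M$ messages at the capacity-optimal composition stays below $2\lceil\log M\rceil$, so that Algorithm \ref{algo2}, which inspects only levels up to $2\hat l$, cannot miss this near-optimal codebook. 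If one prefers to avoid the explicit value $p^{*}=\tfrac{3-\sqrt5}{2}$, the same argument runs with $p$ ranging over the open set $\{p:h(p)>\tfrac12,\ p<\tfrac12\}$, yielding $\liminf_{M}R(M)\ge\frac{h(p)}{1+p}$ for each such $p$; taking the supremum over $p$ then recovers $C$, since $p^{*}$ lies in that set and maximizes $\frac{h(p)}{1+p}$.
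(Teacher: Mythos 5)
Your proof is correct and rests on the same core construction as the paper—a fixed-length codebook at a level $l$ of order $\log M / h(p^*)$ whose codewords have Hamming weight close to $p^* l$—but it makes explicit two points that the paper handles only implicitly. First, the paper's Appendix~\ref{app:rateconv1} (which covers Theorems~\ref{rateconv1} and \ref{rateconv2} simultaneously) exhibits a constant-composition codebook only for the subsequence of message counts $M=\binom{l}{\lfloor lp\rfloor}$ and then invokes Theorem~\ref{optcode2} to conclude that Algorithm~\ref{algo2} does at least as well; your argument instead constructs, for \emph{every} $M$, a suitable level $l(M)$ inside the search window $[\hat l,2\hat l]$, so you obtain $\liminf_{M}R(M)\ge C$ directly along the full sequence rather than a subsequence. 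Second, you isolate $h(p^*)>\tfrac12$ as the precise structural condition that keeps $l(M)$ below $2\hat l$; in the paper this requirement is absorbed into Theorem~\ref{optcode2}'s guarantee that the optimal level never exceeds $2\hat l$ and is never called out. Your converse direction (comparison against Algorithm~\ref{algo1}'s output via Theorem~\ref{rateconv1}, or the concatenation-based zero-error achievability argument giving $R(M)\le C$ directly) is also sound. In short, the two arguments coincide in spirit, with yours being the more careful and self-contained rendition.
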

A proof of Theorem \ref{rateconv2}, is presented in Appendix \ref{app:rateconv1}.\\

The maximum rate at which Alice can communicate with Bob versus the number of codewords $M$, in the case of using fixed-length codewords is depicted in Figure \ref{MaxRateRR2}. 


\section{Noisy Channel Case}
\label{Noisy}

In this section we consider the case where the channel between the users is noisy.
The noise model is as follows. 
The packets generated by either Alice or Bob may be dropped in the link between the users and the shared resource with probability $\delta$.
We start by investigating the effect of the noise on the CQC between the users.
The following definition is required in our analysis.
\begin{definition}
A Z-channel with parameter $\delta$ is a discrete memoryless channel, in which `0' is always transmitted error-free, but `1' is flipped with probability $\delta$ (Figure \ref{TZchannel}).
\end{definition}
For the case of noisy channel, Lemma \ref{lem:hotq} again holds and hence, the signaling scheme proposed in Section \ref{sec:noiseless}, achieves the maximum information transmission rate in the noisy channel as well. As discussed in Section \ref{sec:noiseless}, Bob should keep his queue length positive at all time slots. 
If he keeps his queue length large enough, even if his packets are dropped in multiple time slots, he still has remaining ready to be served packets in his queue.
Thus by keeping the probability of his queue length becoming zero arbitrary small, Bob can avoid dropped packets impacting the scheme. 
Therefore, although noise can affect the transmissions in $X^m \rightarrow A_A^n$ in Markov chain \eqref{MarkovChain}, it will not influence the transmission $A_A^n \rightarrow D_B^n$.

\begin{figure}[t]
\centering
\includegraphics[scale=0.93]{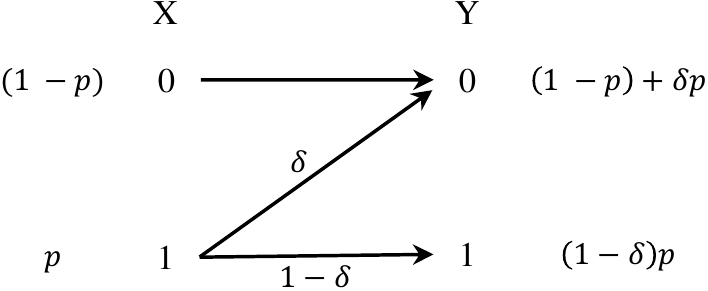}
\caption{Z-channel model of the covet channel with packet drops.}
\label{TZchannel}
\end{figure}

\begin{lemma}
\label{lemma:drop}
In the proposed optimum signaling scheme proposed in Section \ref{sec:noiseless}, packet drops convert the channel between Alice and Bob to a Z-channel with parameter $\delta$.
\end{lemma}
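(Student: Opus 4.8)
The plan is to trace a dropped packet symbol by symbol through the optimum signaling scheme of Section~\ref{sec:noiseless}, and to check that the resulting map from Alice's intended message bits to Bob's decoded bits is exactly the Z-channel transition law. First I would recall the encoder structure: a message bit `0' is realized by Alice idling for one slot (arrival segment `0'), while a message bit `1' is realized by Alice injecting a single packet followed by a forced idle slot (arrival segment `10'), where the idle slot is precisely when her one-slot-delayed packet is served. Since Bob's head-of-queue is kept equal to `1' in every slot (Lemma~\ref{lem:hotq}, achievable by keeping his queue long enough that the underflow probability is negligible, as discussed above), the only drop events that affect decoding are those hitting Alice's injected `1'-packets; Bob's drops are absorbed by his backlog.

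Next I would analyze the two cases. For a message bit `0', Alice sends no packet, so there is nothing to drop: Bob is served in slot $n$ and again in slot $n+1$, and decodes `0' with probability $1$, so `0' is error-free. For a message bit `1', with probability $1-\delta$ the packet survives, so the round robin serves Bob in slot $n$ and Alice in slot $n+1$; Bob sees service followed by no service and decodes `1'. With probability $\delta$ the packet is dropped, so Alice has an empty head-of-queue both in slot $n$ and in the forced-idle slot $n+1$, hence Bob is served in both slots and decodes `0'. Thus `1' is received as `0' with probability exactly $\delta$ and as `1' otherwise, matching the definition of a Z-channel with parameter $\delta$ (Figure~\ref{TZchannel}).

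Finally I would argue that these per-symbol statistics compose into a \emph{memoryless} channel. Drop events are i.i.d.\ across packets by assumption, so it only remains to rule out inter-symbol interference. I would verify that Alice's queue is empty at the boundary between consecutive symbols regardless of the drop outcome: if her `1'-packet survives it is cleared during the forced-idle slot, and if it is dropped it never enters the queue, so in both cases the two-slot segment ends with an empty queue and the next symbol starts from the same state. In particular a dropped `1' cannot trigger the packet-accumulation effect of Example~\ref{example111}, since a dropped packet cannot pile up behind a later one, and it does not shift the alignment of subsequent arrival segments. Hence the channel from Alice's message bits to Bob's decoded bits is a discrete memoryless channel whose single-letter law is the Z-channel, which is the claim.

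I expect the main obstacle to be exactly this inter-symbol/timing check: one must be careful that a dropped `1'-packet leaves both the queue state and the slot alignment unchanged going into the next symbol (it does, because the forced idle slot is expended either way), and that a dropped `1' adjacent to another `1' still decodes to independent outcomes. A secondary point to state explicitly is that the reduction ``Bob's head-of-queue is always `1''' is exact only in the limit of an arbitrarily long protective backlog, so strictly speaking the Z-channel description holds up to an event of vanishing probability, consistent with the asymptotic notion of capacity used in the remainder of the paper.
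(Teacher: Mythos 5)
There is a genuine gap, and it is precisely at the step you flagged as "the main obstacle"---but your resolution of it is wrong relative to the paper's scheme.

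You assume that when Alice sends a `1' whose packet is dropped, she still expends the forced-idle slot, so the dropped symbol occupies two slots just like an undelivered `1'. The paper's scheme does the opposite: Alice detects the drop via the feedback line (she sees that her queue length did not decrease because her packet never arrived at the server), and then she does \emph{not} wait for the extra slot. A dropped `1' therefore occupies only one time slot, not two. This is exactly why Lemma~\ref{lem:frac2} gives $m/n = 1/[(1-p)+\delta p+2(1-\delta)p]$: the $\delta p$ term accounts for one-slot dropped `1's, the $2(1-\delta)p$ term for two-slot delivered `1's. Under your timing, the formula would collapse to $1/(1+p)$ independently of $\delta$, contradicting the lemma you are about to use.

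More importantly, your timing breaks synchronization, which is the very thing you claim to have checked. Bob's decoder consumes one slot for a `0' (served, then the next slot starts a new symbol) and two slots for a `1' (served, then not served). If a dropped `1' occupies two slots on Alice's side, Bob sees "served, served" and parses that as a `0' for slot $n$ with a \emph{new} symbol beginning in slot $n+1$; he has now consumed one decoded bit while Alice believes she has finished one symbol over two slots. From that point on Bob's symbol boundaries are shifted relative to Alice's, so the per-symbol statistics do not compose into a memoryless channel at all. Your argument that "the two-slot segment ends with an empty queue" is correct but addresses only the queue state, not Bob's decoding alignment, and it is the latter that fails. Once you adopt the paper's adaptive timing (dropped `1' $=$ one slot, justified by the feedback line in Figure~\ref{SystemModelfig}), alignment is restored, the per-symbol Z-channel law holds, and memorylessness follows because drops are i.i.d.\ and Alice's queue is empty at every symbol boundary.
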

\begin{proof}
Noise does not affect the data transmission when Alice sends `0' as she does not send any packets in this case. On the other hand, when Alice sends a packet to communicate bit `1', this bit will change to `0' if the packet is dropped, which happens with probability $\delta$.
As stated in Section \ref{sec:noiseless}, normally Alice should wait for one time slot after she sends a packet; however, when packet drops occur she does not need to wait for a time slot. Alice can always tell that a drop has occurred because she knows her queue length at the end of each time slot.
Thus the aggregate effect of noise may be modeled as a Z-channel.
\end{proof}
Figure \ref{TZchannel} shows the resulting Z-channel. Note that the channel is depicted between $X$ and $Y$ in Markov chain \eqref{MarkovChain}, but the noise occurs between sequences $X$ and $A_A$.

In the noisy channel case, Lemma \ref{lem:frac} will be modified as follows.
\begin{lemma}
\label{lem:frac2}
In the noisy CQC in Figure \ref{SystemModelfig}, in the scheme with the maximum information transmission rate between the users, we have
\[
\frac{m}{n}=\frac{1}{(1-p)+\delta p+2(1-\delta)p},
\]
where $p=P(X=1)$.
\end{lemma}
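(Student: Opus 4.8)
The plan is to mimic the proof of Lemma \ref{lem:frac}, adjusting the per-bit slot count to reflect the two possible fates of each transmitted packet under the noise model. First I would recall from the proof of Lemma \ref{lemma:drop} that, under the optimal signaling scheme of Section \ref{sec:noiseless}, a codeword bit `0' still occupies exactly one time slot (Alice idles and Bob is served, then the scheduler resets), while a codeword bit `1' occupies two time slots when Alice's packet is delivered (one slot for Bob, one for Alice) but only one time slot when that packet is dropped — because Alice learns of the drop from her queue-length feedback and therefore does not need to wait the extra slot. This is precisely the mechanism that turns the channel into a Z-channel in Lemma \ref{lemma:drop}.

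Next I would count time slots over a length-$m$ codeword, exactly as in the proof of Lemma \ref{lem:frac}. Since $p = P(X=1)$, a typical codeword contains $pm$ ones and $(1-p)m$ zeros; among the $pm$ ones, a fraction $1-\delta$ are delivered and cost two slots each, while a fraction $\delta$ are dropped and cost one slot each. Summing the three contributions gives a total transmission time $n = (1-p)m + 2(1-\delta)pm + \delta pm$, and dividing through by $m$ yields $\frac{m}{n} = \frac{1}{(1-p) + \delta p + 2(1-\delta)p}$, as claimed. Note that setting $\delta = 0$ recovers Lemma \ref{lem:frac}, which is a useful sanity check.

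The only point requiring care is the interpretation of $n$: as in Definition \ref{Def2} and the proof of Lemma \ref{lem:frac}, $n$ denotes the average number of time slots used, so the quantities $\delta$ and $1-\delta$ enter as expectations over the packet-drop process (equivalently, by the law of large numbers the per-codeword transmission time concentrates on this value as $m \to \infty$, which is the regime in which the rate is defined). Consequently the main — and only mild — obstacle is bookkeeping: ensuring the dropped-`1' case is correctly charged one slot rather than two, which is exactly the content established in the proof of Lemma \ref{lemma:drop} and hence may be invoked directly.
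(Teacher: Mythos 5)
Your proposal is correct and takes essentially the same approach as the paper: both count the expected slot cost of a length-$m$ codeword as $(1-p)m$ zeros at one slot each plus $(1-\delta)pm$ delivered ones at two slots each plus $\delta pm$ dropped ones at one slot each, relying on the observation (made in the proof of Lemma \ref{lemma:drop}) that Alice detects drops via her queue feedback and need not idle after a dropped packet. Your $\delta=0$ sanity check and the remark that $n$ is an average are nice clarifications but do not change the argument.
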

\begin{proof}
According to the optimum signaling scheme, sending a bit `0' and a bit `1' require 1 and 2 time slots, respectively. A bit `0' can be a result of either a send `0' or a flipped `1'. Therefore,
\begin{align*}
\frac{m}{n}&=\frac{m}{\delta pm+(1-p)m+(1-\delta)2pm}\\
&=\frac{1}{(1-p)+\delta p+2(1-\delta)p}.
\end{align*}
\end{proof}

Equipped with Lemmas \ref{lemma:drop} and  \ref{lem:frac2}, we next calculate the capacity of the introduced noisy covert channel.

\begin{figure}[t]
\centering
\includegraphics[scale=0.45]{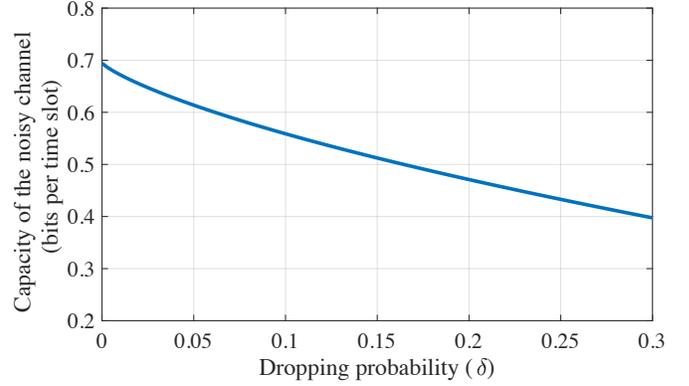}
\caption{Capacity of the noisy CQC versus the drop probability $\delta$.}
\label{MaxRateNoise}
\end{figure}

\begin{theorem}
\label{theorem2}
The capacity of the noisy CQC between Alice and Bob with packet drop probability $\delta$ in a shared round robin scheduler in Figure \ref{SystemModelfig} is
\begin{equation}
C = \underset{0\le p\le1}{\mathrm{max}} \frac{h((1 - \delta) p) - p h(\delta)}{(1 - p) + \delta p + 2 (1 - \delta) p},
\end{equation}
where $p$ is the probability of sending message bit `1' by Alice and $h(\cdot)$ is the binary entropy function.
\end{theorem}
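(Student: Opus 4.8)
The plan is to mirror the proof of Theorem~\ref{theorem111}, separately establishing achievability (a lower bound on $C$) and a converse, but with the noiseless end‑to‑end channel replaced by the Z‑channel identified in Lemma~\ref{lemma:drop} and with the slot/bit conversion of Lemma~\ref{lem:frac2} in place of Lemma~\ref{lem:frac}. Throughout one uses the optimal signaling of Section~\ref{sec:noiseless}, which by Lemma~\ref{lem:hotq} keeps Bob's head‑of‑queue equal to `1'; as noted before Lemma~\ref{lemma:drop}, by letting Bob hold a sufficiently long queue the probability that his queue ever empties is made negligible, so the induced map $X^m \to Y^m$ is, up to a vanishing correction, a memoryless Z‑channel with crossover probability $\delta$.

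For achievability, fix $p \in [0,1]$ and draw a codebook of $M$ codewords, each an i.i.d.\ $\mathrm{Bernoulli}(p)$ string of $m$ bits. Applying Shannon's channel coding theorem to the memoryless Z‑channel $X^m \to Y^m$ with input distribution $\mathrm{Bernoulli}(p)$, one can take $M = 2^{m(I(X;Y)-\epsilon)}$ with error probability tending to $0$, where $I(X;Y) = H(Y) - H(Y\mid X) = h((1-\delta)p) - p\,h(\delta)$, since $P(Y=1)=(1-\delta)p$, $H(Y\mid X=0)=0$, and $H(Y\mid X=1)=h(\delta)$. By Lemma~\ref{lem:frac2}, or equivalently by a law‑of‑large‑numbers argument on the per‑bit transmission times (a `0' costs one slot, a non‑dropped `1' costs two, a dropped `1' costs one) with atypical codewords contributing negligibly, the average number of slots is $n = m\big[(1-p)+\delta p + 2(1-\delta)p\big] + o(m)$. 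Hence $R = \tfrac{\log M}{n} \to \tfrac{h((1-\delta)p)-p\,h(\delta)}{(1-p)+\delta p+2(1-\delta)p}$, and maximizing over $p$ exhibits this quantity as an achievable rate.

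For the converse, take any sequence of $(n,M,\epsilon_n)$‑codes with $\epsilon_n\to 0$; as in Theorem~\ref{theorem111} the argument reduces to codewords of a common bit length $m$. Let $q_i = P(X_i=1)$ when $W$ is uniform. Fano's inequality and the data‑processing inequality along the Markov chain~\eqref{MarkovChain} give $(1-\epsilon_n)\log M \le I(X^m;Y^m) + 1$, and since $X^m\to Y^m$ is a memoryless Z‑channel, $I(X^m;Y^m) \le \sum_{i=1}^{m} I(X_i;Y_i) = \sum_{i=1}^{m}\big[h((1-\delta)q_i) - q_i\,h(\delta)\big]$. On the other hand, averaging over the codebook, the expected transmission time equals $n = \sum_{i=1}^m\big[(1-q_i) + \delta q_i + 2(1-\delta)q_i\big]$. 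Each denominator term equals $1+(1-\delta)q_i \ge 1$, so $n \ge m$ and each bit costs at most two slots, giving $m \ge n/2$. Combining, and using the mediant inequality $\tfrac{\sum_i a_i}{\sum_i b_i}\le \max_i \tfrac{a_i}{b_i}$ for positive $b_i$,
\[
R = \frac{\log M}{n} \le \frac{1}{1-\epsilon_n}\left( \max_{0\le q\le 1} \frac{h((1-\delta)q) - q\,h(\delta)}{(1-q)+\delta q + 2(1-\delta)q} + \frac{1}{m}\right),
\]
and letting $n\to\infty$ (so $\epsilon_n\to 0$ and $m\to\infty$) yields the matching upper bound, since the objective is continuous on the compact set $[0,1]$.

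The step I expect to require the most care is not either bound in isolation but the reduction to a \emph{clean} memoryless Z‑channel: one must verify that, in the achievability direction, Bob's queue can be kept from ever emptying except on an event of vanishing probability, and that the randomness of the per‑codeword transmission time does not perturb $M$, $n$, or $\epsilon_n$ beyond sub‑exponential or vanishing amounts, so that replacing slot counts by their expectations (Lemma~\ref{lem:frac2}) is legitimate. Both points are handled exactly as in the noiseless analysis underlying Lemma~\ref{lem:frac} and Theorem~\ref{theorem111}, via standard concentration.
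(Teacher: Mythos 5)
Your proposal is correct and follows the same overall architecture as the paper's proof: random-coding achievability over the memoryless Z-channel identified in Lemma~\ref{lemma:drop}, a Fano/data-processing converse along the Markov chain~\eqref{MarkovChain}, and the time-normalization $\frac{m}{n}=\frac{1}{(1-p)+\delta p+2(1-\delta)p}$ from Lemma~\ref{lem:frac2}, with $I(X;Y)=h((1-\delta)p)-p\,h(\delta)$ as in both arguments.

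The one substantive difference is in the converse. The paper single-letterizes $I(X^m;Y^m)\le\sum_i I(X_i;Y_i)$ and then passes directly to $\max_{P_X}\tfrac{m}{n}I(X;Y)$, invoking Lemma~\ref{lem:frac2}; but Lemma~\ref{lem:frac2} is stated for ``the scheme with the maximum information transmission rate,'' which is not literally an arbitrary code. You instead introduce the per-position marginals $q_i=P(X_i=1)$, write the exact expected transmission time $n=\sum_i\bigl[1+(1-\delta)q_i\bigr]$ for any code under the optimal signaling, and apply the mediant inequality $\tfrac{\sum_i a_i}{\sum_i b_i}\le\max_i\tfrac{a_i}{b_i}$ to land on the single-letter bound. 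This makes the converse self-contained for arbitrary (possibly non-i.i.d.) codes without appealing to the optimal-scheme relation, and correctly handles position-dependent input statistics; the paper's route is shorter but implicitly assumes what your mediant step proves. Your achievability argument matches the paper's modulo phrasing (the paper writes $\max_p I(X;Y)$ where it really means $I(X;Y)$ evaluated at the ratio-maximizing $p^*$; your parametrize-then-maximize-at-the-end presentation is cleaner). The cautionary remarks about keeping Bob's queue nonempty with high probability and about concentration of per-codeword transmission times correspond to the paper's discussion preceding Lemma~\ref{lemma:drop} and the ``on average $n$ time slots'' bookkeeping in the achievability proof, so nothing essential is missing.
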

See Appendix \ref{AppendixB} for a proof.\\

Figure \ref{MaxRateNoise} depicts the capacity of the noisy CQC versus the dropping probability $\delta$.

In the noisy setting, as mentioned earlier, synchronization between the encoder and decoder sides of the system may be lost. To prevent this from happening, users can use a fixed-length codebook such as the one presented in Subsection \ref{FixedLengthCodeword}.

\section{Conclusion}

We studied a covert queueing channel between two users sharing a round robin scheduler.
An information-theoretic framework was proposed to derive the capacity of this channel in both noisy and noiseless cases.
We showed that in the noiseless case, an information rate as high as 0.6942 bits per time slot is achievable in this channel. Clearly this rate of transmission can lead to significant information leakage in a system and deserves special attention in high security systems.
For the noisy case, where users' packets may drop, we again analyzed the highest achievable information rate and obtained the capacity for different levels of noise.
Moreover, we proposed practical finite-length code constructions,
which asymptotically achieves the capacity limit.


\begin{appendices}

\section{Proof of Stability}
\label{AppendixA}


As mentioned in Section \ref{SystemModel}, each user has a separate queue. Denote the queue length and the number of packet arrivals at each queue at time slot $n$ by $q_i(n)$ and $a_i(n)$, respectively. Let $\mathbb{E} [a_i(n)] = \lambda_i$ and $\mathbb{E} [a_i^2(n)] < \infty$, where $i\in\{A,B\}$ signifies Alice or Bob.
We assume the arrival processes of Alice and Bob are independent of each other and the system state. The system is stable if neither user's queue length grows to infinity in the steady state of the system, as long as the arrivals are in the capacity region of the scheduler. Thus, it suffices to prove that the sum of the queue lengths is finite with probability one, which implies the stability of both queues. We use the Foster-Lyapunov theorem to prove this statement \cite{yekkehkhany2018gb}. 
Denote the sum of the queue lengths with $\hat{q} (n) = q_A(n) + q_B(n)$, and the sum of packet arrivals for Alice and Bob with $\hat{a} (n) = a_A(n) + a_B(n)$. Note that $\mathbb{E}[\hat{a}(n)] = \lambda_A + \lambda_B$ and the second moment of $\hat{a}(n)$ is finite. As long as a task is available in one of the two queues, the round robin scheduler serves a task; that is, the service rate is one packet per time slot. Thus $\hat{q}$ evolves as
\begin{equation*}
\hat{q} (n + 1) = (\hat{q}(n) + \hat{a}(n) - 1)_+,
\end{equation*}
where $(x)_+ \triangleq \max \{x, 0 \}$.

Choose the Lyapunov function $V(\hat{q}(n)) = \frac{\hat{q}^2(n)}{2}$. Note that this choice of Lyapunov function satisfies the requirements of non-negativity, being equal to zero only at $\hat{q} = 0$, and going to infinity as $\hat{q}$ goes to infinity \cite{yekkehkhany2017near}. We show that the drift of this Lyapunov function is negative outside of a bounded region of the state space, and is positive and finite inside this bounded region, which implies that the system state is positive recurrent.

\begin{equation*}
\begin{aligned}
\mathbb{E} [ & V(\hat{q} (n + 1)) - V(\hat{q} (n)) \big | \hat{q}(n) = q] \\
& = \mathbb{E} [\frac{(q + \hat{a}(n) - 1)_+^2 - q^2}{2} \big | q] \\
& \leq \mathbb{E} [\frac{ \hat{a}^2(n) + 1 + 2\hat{a}(n) q - 2q - 2\hat{a}(n)}{2}] \\
& \leq \mathbb{E} [\frac{ \hat{a}^2(n) + 1 - 2\hat{a}(n)}{2} - 2q + 2\hat{a}(n) q] \\
& = c - 2q(1 - \lambda_A - \lambda_B),
\end{aligned}
\end{equation*}
where in the last equality, $c$ is a constant because $\hat{a}$ has bounded first and second moments.
For $\lambda_A + \lambda_B < 1$, the drift of the Lyapunov function is bounded by the constant $c$ in the bounded set $B = \{ \hat{q} \ | \ \hat{q} \leq \frac{c}{2(1 - \lambda_A - \lambda_B)} \}$, and is negative in the complement set, $B^c$. Therefore, with our queueing structure and round robin scheduler, the system is stable as long as $\lambda_A + \lambda_B < 1$.

\section{Proof of Lemma \ref{lem:hotq}}
\label{app:hotq}
Suppose Bob's head-of-the-queue stream contains a `1' followed by $t$ `0's. During these $t+1$ bits, if Alice's head-of-the-queue is equal to $t+1$ consecutive `1's, $D_B$ will contain a `1' followed by $t+1$ `0's; otherwise, it will contain a `1' followed by $t$ `0's. Therefore, denoting the probability of the event that Alice's head-of-the-queue is equal to $t+1$ consecutive `1's by $p_{t+1}$, there are two distinguishable outputs for Bob, received on average in $(t+1)(1-p_{t+1})+(t+2)p_{t+1}=(t+1)+p_{t+1}$ time slots, which gives the information transmission rate of
\[
\frac{\log 2}{(t+1)+p_{t+1}},
\]
which regardless of the value for $p_{t+1}$, is maximized when $t=0$. Therefore, the maximum information transmission rate between the users is achieved when Bob's head-of-the-queue bit stream is always equal to `1'.

\section{Proof of Theorem \ref{theorem111}}
\label{app:noiseless}

The proof consists of achievability and converse arguments.\\
\textbf{Converse:} For the CQC in a shared round robin scheduler with service rate 1 depicted in Figure \ref{SystemModelfig}, any code consisting of a codebook of $M$ equiprobable binary codewords, where messages take on average $n$ time slots to be received, satisfies
\begin{equation*}
\begin{aligned}
\frac{1}{n}\log{M}&\stackrel{(a)}{=}\frac{1}{n}H(W)
=\frac{1}{n}H(X^m)
\le \frac{1}{n}\sum_{i=1}^mH(X_i)\\
&\le \max_{P_X}\frac{m}{n}H(X)
\le\max_{0\le p\le 1}\frac{m}{n}h(p),
\end{aligned}
\end{equation*}
where $(a)$ holds because $W$ is a uniform random variable over the message set $\{1,...,M\}$. By Lemma \ref{lem:frac}, in the scheme with the maximum information transmission rate between Alice and Bob, we have $\frac{m}{n}=\frac{1}{1+p}$. Therefore, 
\[
C\le\max_{0\le p\le 1}\frac{h(p)}{1+p}.
\]
\textbf{Achievability:}
We fix a Bernoulli distribution $P$ with parameter $p^*$, where
\begin{equation*}
\begin{aligned}
p^* = \arg \max_{p\in[0,1]} \frac{h(p)}{1+p},
\end{aligned}
\end{equation*}
and generate a binary codebook $\mathcal{C}$ containing i.i.d. sequences of length $m$ drawn according to $P$, where $m=\frac{n}{1+p^*}$. A standard typicality argument \cite[Chapter 3]{cover2012elements}, shows that as $n$ goes to infinity, we can have $2^{H(X^m)}=2^{mh(p^*)}$ distinct codewords in $\mathcal{C}$.

In order to send a bit `1', Alice sends a packet and then idles for one time slot. To send a bit `0', she just idles for one time slot. Thus, each message on average takes $2mp^*+m(1-p^*) = n$ time slots to be transmitted. At the same time, Bob keeps his head of the queue always full.
Since this is a noiseless channel, the error will be zero. Therefore,
\[
C\ge \frac{\log 2^{mh(p^*)}}{n}=\frac{mh(p^*)}{m(1+p^*)}=\frac{h(p^*)}{1+p^*}=\max_{0\le p\le 1}\frac{h(p)}{1+p}.
\]
The achievability and converse parts above, complete the proof of the coding theorem.

\section{Proof of Theorem \ref{optcode1}}
\label{app:optcode1}

We show that Algorithm \ref{algo1} minimizes the sum of costs of codewords which is $2 n_1(\mathcal{C}) + n_0(\mathcal{C})$. 

We note that, replacing codeword $\bar{X}$ with cost $\eta(\bar{X})$ results in two codewords $\bar{X}0$ and $\bar{X}1$ with costs $\eta(\bar{X}) + 1$ and $\eta(\bar{X}) + 2$, respectively. As a result, replacing codeword $\bar{X}$ with its two children
causes additional cost of $\eta(\bar{X}) + 3$, and an additional codeword to the codebook. Therefore, since the added cost is increasing in $\eta(\bar{X})$, to obtain the optimal codebook in a step, it suffices to replace the minimum cost codeword by its children.

Suppose Algorithm \ref{algo1} outputs codebook $\mathcal{C}_1$. To prove the claim by contradiction, suppose the codebook $\mathcal{C}_2$ resulted from another algorithm is optimum where both $\mathcal{C}_1$ and $\mathcal{C}_2$ have $M$ codewords.
We first find the subtree which is common between $\mathcal{C}_1$ and $\mathcal{C}_2$, which implies that two algorithms are equivalent until, say, step $m$.
From that step, all the replacements are different in two algorithms. The first replacement in Algorithm \ref{algo1} gives a smaller cost (because we assumed to choose the minimum cost replacement). For the next replacement in step $m+1$, Algorithm \ref{algo1} had the option of the other algorithm's replacement in step $m$, yet it did not choose that. This means that again a better replacement was possible.
Adding this to the fact that the costs of children of a codeword is larger than its own cost, concludes that the replacement in step $m+1$ for Algorithm \ref{algo1} was also a better choice. This reasoning applies to all steps in which two algorithms are different and concludes that $\mathcal{C}_2$ cannot be optimum.

\section{Proof of Theorems \ref{rateconv1} and \ref{rateconv2}}
\label{app:rateconv1}

In this appendix, we show that the information transmission rate of our proposed optimum codebooks created by Algorithms \ref{algo1} and \ref{algo2} converge to the capacity of the covert channel as the number of messages goes to infinity.
We prove that the information rate of another non-optimum codebook with rate lower than the rates of both aforementioned codebooks achieves the capacity.

Consider a codebook with fixed-length codewords from the $l$-th level of the codeword tree. We choose each codeword to have exactly $\lfloor lp\rfloor $ bits `1', where the parameter $p \in [0, 1]$ can be selected in a manner to maximize the information rate. Such a codebook consists of $M = {{l}\choose{\lfloor lp \rfloor}}$ messages all of which have equal transmission time $2 \times \lfloor lp \rfloor + (l - \lfloor lp \rfloor)$. We show that the information rate of this codebook asymptotically converges to the capacity as $l$ (or equivalently the number of messages) goes to infinity.
From Definition \ref{Def2}, we have
\[
\begin{aligned}
\max_p \lim_{n \rightarrow \infty} \frac{\log{{l}\choose{\lfloor lp \rfloor}}}{n} 
 &\overset{(a)}{=} \max_p \lim_{l \rightarrow \infty} \frac{l\cdot h(\frac{\lfloor lp \rfloor}{l}) + o(l)}{ l + \lfloor lp \rfloor } \\
&\overset{(b)}{=} \max_p \frac{h(p)}{1 + p} = C,
\end{aligned}
\]
where $n=l + \lfloor lp \rfloor$, $(a)$ follows because using Stirling's approximation it can be shown that $\log {{l}\choose{k}} = l\cdot h(\frac{k}{l}) + o(l)$, $(b)$ holds since the binary entropy function $h(\cdot)$ is continuous, and the last equality follows from Theorem \ref{theorem111}.

\section{Proof of Theorem \ref{optcode2}}
\label{app:optcode2}

It suffices to show that the best rate is contained in the search range of $ l = \hat{l}$ to $2\hat{l}$. 

First, we note that if $l < \hat{l}$, then $2^l < M$, which implies that there is not a sufficient number of codewords in level $l$ to cover all messages. On the other hand, since $l < \frac{\eta(\mathcal{C}_l)}{M} < 2l$, we have
\[
\frac{\log M}{2l} < R_l < \frac{\log M}{l},
\]
where $R_l$ is the information rate of the optimum codebook at level $l$. Therefore, for all $l > 2\hat{l}$,
\[
R_l < \frac{\log M}{l} < \frac{\log M}{2\hat{l}} < R_{\hat{l}}.
\]
In other words, for all $l > 2\hat{l}$ the optimum information rate is less than the optimum information rate of $\mathcal{C}_{\hat{l}}$. This implies that there is no need to check any level lower than $\hat{l}$.


\section{Proof of Theorem \ref{theorem2}}
\label{AppendixB}

The proof consists of achievability and converse arguments.\\
\textbf{Converse:} For the CQC in a shared round robin scheduler with service rate 1 depicted in Figure \ref{SystemModelfig}, any code consisting of a codebook of $M$ equiprobable binary codewords, where messages take on average $n$ time slots to be received, satisfies
\begin{equation*}
\begin{aligned}
\displaystyle\frac{1}{n}\log{M}&\stackrel{(a)}{=}\frac{1}{n}H(W)\\
&=\frac{1}{n}I(W;\hat{W})+\frac{1}{n}H(W|\hat{W})\\
&\overset{(b)}{\le}\frac{1}{n}I(W;\hat{W})+\epsilon_{n}\\
&\overset{(c)}{\le}\frac{1}{n}I(X^m;Y^m)+\epsilon_{n},
\end{aligned}
\end{equation*}
where $(a)$ holds because $W$ is a uniform random variable over the message set $\{1,...,M\}$, $(b)$ follows from Fano's inequality with $\epsilon_{n}=\frac{1}{n}(H(P_e)+P_e\log_2{(M-1)})$, and $(c)$ follows from application of data processing inequality to the Markov chain in \eqref{MarkovChain}. 

Since the Z-channel model is memoryless,
\[
I(X^m;Y^m)\le \sum_{i=1}^mI(X_i;Y_i).
\]
Therefore,
\begin{equation*}
\label{eq0}
\begin{aligned}
\displaystyle\frac{1}{n}\log{M}
&\le\sum_{i=1}^m\frac{1}{n}I(X_i;Y_i)+\epsilon_{n}\le\max_{P_X}\frac{m}{n}I(X;Y)+\epsilon_{n}.
\end{aligned}
\end{equation*}
By Lemma \ref{lem:frac2}, in the scheme with the maximum information transmission rate between Alice and Bob, we have $\frac{m}{n}=\frac{1}{(1-p)+\delta p+2(1-\delta)p}$. Also, we note that
\begin{equation*}
\begin{aligned}
\label{eq2}
I(X;Y) = & h(Y) - h(Y|X) =  h((1 - \delta)p) - ph(\delta).
\end{aligned}
\end{equation*}
Therefore, we have
\begin{equation*}
\begin{aligned}
\displaystyle\frac{1}{n}\log{M}\le\max_{0\le p\le 1} \frac{h((1 - \delta)p) - ph(\delta)}{(1-p)+\delta p+2(1-\delta)p}+\epsilon_{n}.
\end{aligned}
\end{equation*}
As $n$ goes to infinity, $\epsilon_{n}\rightarrow 0$, and we have
\begin{equation*}
\begin{aligned}
C\le\max_{0\le p \le1} \frac{h((1 - \delta)p) - ph(\delta)}{(1-p)+\delta p+2(1-\delta)p}.
\end{aligned}
\end{equation*}

\noindent
\textbf{Achievability:}
Fix a Bernoulli distribution $P$ with parameter $p^*$, where
\begin{equation*}
\begin{aligned}
p^* = \arg \max_{0\le p \le1} \frac{h((1 - \delta)p) - ph(\delta)}{(1-p)+\delta p+2(1-\delta)p},
\end{aligned}
\end{equation*}
and generate a binary codebook $\mathcal{C}$ containing $2^{mR}$ i.i.d. sequences of  length $m$, drawn according to $P$, where $m=n/[(1-p^*)+\delta p^*+2(1-\delta)p^*]$.

In order to send a bit `1', Alice sends a packet and then idles for one time slot. To send a bit `0', she just idles for one time slot. Thus, each message on average takes $m[(1-p^*)+\delta p^*+2(1-\delta)p^*] = n$ time slots to be transmitted. At the same time, Bob keeps his head of the queue always full.

Since the Z-channel is a discrete memoryless channel, by the standard random decoding arguments \cite[Chapter 7]{cover2012elements}, the error can be kept arbitrary close to zero as $n$ goes to infinity as long as $R\le\max_{0\le p\le1} I(X;Y)$.
Consequently,
\begin{equation*}
\begin{aligned}
C & \ge\frac{\log 2^{m \times \max_{0\le p\le1} I(X;Y)}}{n} \\
& \ge \max_{0\le p\le1} \frac{h((1 - \delta)p) - ph(\delta)}{(1-p)+\delta p+2(1-\delta)p}.
\end{aligned}
\end{equation*}
The achievability and converse parts above, complete the proof of the coding theorem.

\end{appendices}

\section*{Acknowledgment}
This work was in part supported by NSF grant CCF 17-04970, and SaTC CORE 17-18952.

\bibliographystyle{ieeetr}
\bibliography{thesisrefs}

\begin{thebibliography}{10}

\bibitem{gong2011information}
X.~Gong, N.~Kiyavash, and P.~Venkitasubramaniam, ``Information theoretic
  analysis of side channel information leakage in {FCFS} schedulers,'' in {\em
  Information Theory Proceedings (ISIT), 2011 IEEE International Symposium on},
  pp.~1255--1259, IEEE, 2011.

\bibitem{gong2014quantifying}
X.~Gong and N.~Kiyavash, ``Quantifying the information leakage in timing side
  channels in deterministic work-conserving schedulers,'' {\em arXiv preprint
  arXiv:1403.1276}, 2014.

\bibitem{kadloor2010low}
S.~Kadloor, X.~Gong, N.~Kiyavash, T.~Tezcan, and N.~Borisov, ``Low-cost side
  channel remote traffic analysis attack in packet networks,'' in {\em
  Communications, 2010 IEEE International Conference on}, IEEE, 2010.

\bibitem{chen2015schedule}
C.-Y. Chen, A.~Ghassami, S.~Nagy, M.-K. Yoon, S.~Mohan, N.~Kiyavash, R.~B.
  Bobba, and R.~Pellizzoni, ``Schedule-based side-channel attack in
  fixed-priority real-time systems,'' tech. rep., 2015.

\bibitem{chen2017reconnaissance}
C.-Y. Chen, A.~Ghassami, S.~Mohan, N.~Kiyavash, R.~B. Bobba, R.~Pellizzoni, and
  M.-K. Yoon, ``A reconnaissance attack mechanism for fixed-priority real-time
  systems,'' {\em arXiv preprint arXiv:1705.02561}, 2017.

\bibitem{liberatore2006inferring}
M.~Liberatore and B.~N. Levine, ``Inferring the source of encrypted http
  connections,'' in {\em Proceedings of the 13th ACM conference on Computer and
  communications security}, pp.~255--263, ACM, 2006.

\bibitem{song2001timing}
D.~X. Song, D.~Wagner, and X.~Tian, ``Timing analysis of keystrokes and timing
  attacks on {SSH}.,'' in {\em USENIX Security Symposium}, vol.~2001, 2001.

\bibitem{wright2010uncovering}
C.~V. Wright, L.~Ballard, S.~E. Coull, F.~Monrose, and G.~M. Masson,
  ``Uncovering spoken phrases in encrypted voice over ip conversations,'' {\em
  ACM Transactions on Information and System Security (TISSEC)}, vol.~13,
  no.~4, p.~35, 2010.

\bibitem{tahir2015sneak}
R.~Tahir, M.~T. Khan, X.~Gong, A.~Ahmed, A.~Ghassami, H.~Kazmi, M.~Caesar,
  F.~Zaffar, and N.~Kiyavash, ``Sneak-peek: High speed covert channels in data
  center networks,'' in {\em IEEE International Conference on Computer
  Communications (INFOCOM)}, IEEE, 2016.

\bibitem{murdoch2005embedding}
S.~J. Murdoch and S.~Lewis, ``Embedding covert channels into {TCP/IP},'' in
  {\em International Workshop on Information Hiding}, 2005.

\bibitem{llamas2005evaluation}
D.~Llamas, A.~Miller, and C.~Allison, ``An evaluation framework for the
  analysis of covert channels in the {TCP/IP} protocol suite.,'' in {\em ECIW},
  pp.~205--214, 2005.

\bibitem{kang1996network}
M.~H. Kang, I.~S. Moskowitz, and D.~C. Lee, ``A network pump,'' {\em IEEE
  Transactions on Software Engineering}, vol.~22, no.~5, pp.~329--338, 1996.

\bibitem{anantharam1996bits}
V.~Anantharam and S.~Verdu, ``Bits through queues,'' {\em IEEE Transactions on
  Information Theory}, vol.~42, no.~1, pp.~4--18, 1996.

\bibitem{ghassami2018covert}
A.~Ghassami and N.~Kiyavash, ``A covert queueing channel in fcfs schedulers,''
  {\em IEEE Transactions on Information Forensics and Security}, vol.~13,
  no.~6, pp.~1551--1563, 2018.

\bibitem{soltani2015covert}
R.~Soltani, D.~Goeckel, D.~Towsley, and A.~Houmansadr, ``Covert communications
  on poisson packet channels,'' in {\em Communication, Control, and Computing
  (Allerton), 2015 53rd Annual Allerton Conference on}, pp.~1046--1052, IEEE,
  2015.

\bibitem{soltani2016covert}
R.~Soltani, D.~Goeckel, D.~Towsley, and A.~Houmansadr, ``Covert communications
  on renewal packet channels,'' in {\em Communication, Control, and Computing
  (Allerton), 2016 54th Annual Allerton Conference on}, pp.~548--555, IEEE,
  2016.

\bibitem{mukherjee2016covert}
P.~Mukherjee and S.~Ulukus, ``Covert bits through queues,'' in {\em
  Communications and Network Security (CNS), 2016 IEEE Conference on},
  pp.~626--630, IEEE, 2016.

\bibitem{cabuk2004ip}
S.~Cabuk, C.~E. Brodley, and C.~Shields, ``Ip covert timing channels: design
  and detection,'' in {\em Proceedings of the 11th ACM Conference on Computer
  and Communications Security}, pp.~178--187, ACM, 2004.

\bibitem{wsj2011}
D.~Wakabayashi, ``Breach complicates {Sony}'s network ambitions,'' {\em The
  Wall Street Journal}, April 28, 2011.

\bibitem{gianvecchio2007detecting}
S.~Gianvecchio and H.~Wang, ``Detecting covert timing channels: an
  entropy-based approach,'' in {\em Proceedings of the 14th ACM conference on
  Computer and communications security}, pp.~307--316, ACM, 2007.

\bibitem{ghassami2015capacity}
A.~Ghassami, X.~Gong, and N.~Kiyavash, ``Capacity limit of queueing timing
  channel in shared {FCFS} schedulers,'' in {\em 2015 IEEE International
  Symposium on Information Theory (ISIT)}, pp.~789--793, IEEE, 2015.

\bibitem{kadloor2015delay}
S.~Kadloor and N.~Kiyavash, ``Delay-privacy tradeoff in the design of
  scheduling policies,'' {\em IEEE Transactions on Information Theory},
  vol.~61, no.~5, pp.~2557--2573, 2015.

\bibitem{srikant2013communication}
R.~Srikant and L.~Ying, {\em Communication networks: an optimization, control,
  and stochastic networks perspective}.
\newblock Cambridge University Press, 2013.

\bibitem{xie2016scheduling}
Q.~Xie, A.~Yekkehkhany, and Y.~Lu, ``Scheduling with multi-level data locality:
  Throughput and heavy-traffic optimality,'' in {\em INFOCOM 2016-The 35th
  Annual IEEE International Conference on Computer Communications, IEEE},
  pp.~1--9, IEEE, 2016.

\bibitem{cover2012elements}
T.~M. Cover and J.~A. Thomas, {\em Elements of Information Theory}.
\newblock John Wiley \& Sons, 2012.

\bibitem{csiszar2011information}
I.~Csiszar and J.~K{\"o}rner, {\em Information Theory: Coding Theorems for
  Discrete Memoryless Systems}.
\newblock Cambridge University Press, 2011.

\bibitem{yekkehkhany2018gb}
A.~Yekkehkhany, A.~Hojjati, and M.~H. Hajiesmaili, ``Gb-pandas:: Throughput and
  heavy-traffic optimality analysis for affinity scheduling,'' {\em ACM
  SIGMETRICS Performance Evaluation Review}, vol.~45, no.~2, pp.~2--14, 2018.

\bibitem{yekkehkhany2017near}
A.~Yekkehkhany, ``Near data scheduling for data centers with multi levels of
  data locality,'' {\em (Dissertation, University of Illinois at
  Urbana-Champaign)}.

\end{thebibliography}


%








\end{document}